\newcommand{\QC}{\mathfrak{Q}}
\newcommand{\h}{\mathbb{H}}
\newcommand{\den}{\mathcal{D}}
\newcommand{\I}{\mathcal{I}}
\newcommand{\BB}{\mathbb{B}}
\newcommand{\II}{\mathbb{I}}
\newcommand{\M}{\mathds{M}}
\newcommand{\HH}{\mathbf{H}}
\newcommand{\LL}{\mathbf{L}}
\newcommand{\LLL}{\mathcal{L}}
\newcommand{\PP}{\mathbf{P}}
\newcommand{\T}{\mathrm{T}}
\newcommand{\dd}{\mathrm{d}}
\newcommand{\id}{\mathbf{I}}
\newcommand{\J}{\mathcal{J}}
\newcommand{\vl}{\mathrm{V2L}}
\newcommand{\lv}{\mathrm{L2V}}
\newcommand{\e}{\mathrm{e}}
\newcommand{\tr}{\mathrm{tr}}
\newcommand{\x}{\mathbf{x}}
\newcommand{\ntl}{\mathrm{U}\,}
\newcommand{\mnt}{\mathrm{mnt}}
\newcommand{\true}{\texttt{true}}
\newcommand{\mybrace}[4]{ \draw[#4, decoration={brace, amplitude=4}, decorate] ([yshift=6mm]#1)--([yshift=6mm]#2) node[midway, above=1.5mm]{#3}; }
\newcommand{\mybracesmall}[4]{ \draw[#4, decoration={brace, amplitude=4}, decorate] ([yshift=6mm]#1)--([yshift=6mm]#2) node[midway, above=1.5mm]{#3}; }
\newcommand{\mybracesmallbelow}[4]{ \draw[#4, decoration={brace, mirror, amplitude=4}, decorate] ([yshift=-6mm]#1)--([yshift=-6mm]#2) node[midway, below=1.5mm]{#3}; }
\pgfplotsset{compat=1.18,every axis/.append style={
line width=0.8pt,
mark size=1pt,
tick style={line width=0.6pt}}}
\definecolor{pink1}{RGB}{236,189,180}
\definecolor{blue1}{RGB}{99,157,217}
\definecolor{red1}{RGB}{200,74,72}
\definecolor{blue2}{RGB}{43,123,175}
\definecolor{green1}{RGB}{157,204,94}
\begin{document}
\title{A Sample-Driven Solving Procedure for the Repeated Reachability of Quantum CTMCs}
%
%
\author{Hui Jiang$^a$ \and Jianling Fu$^b$ \and Ming Xu$^a$ \and Yuxin Deng$^a$ \and Zhi-Bin Li$^b$}
\titlerunning{Repeated Reachability in Quantum CTMCs}
\authorrunning{Jiang, H., et al.}
%
\institute{$^a$Shanghai Key Lab of Trustworthy Computing,
East China Normal University, China \\
$^b$Shanghai Institute for AI Education, East China Normal University, China
}

\maketitle              
\begin{abstract}
	Reachability analysis plays a central role in system design and verification.
	The reachability problem, denoted $\Diamond^\J\,\Phi$,
	asks whether the system will meet the property $\Phi$ after some time in a given time interval $\J$.
	Recently, it has been considered on a novel kind of real-time systems
	--- quantum continuous-time Markov chains (QCTMCs),
	and embedded into the model-checking algorithm.
	In this paper, we further study the repeated reachability problem in QCTMCs,
	denoted $\Box^\I\,\Diamond^\J\,\Phi$,
	which concerns whether the system starting from each \emph{absolute} time in $\I$
	will meet the property $\Phi$ after some coming \emph{relative} time in $\J$.
	First of all, we reduce it to
	the real root isolation of a class of real-valued functions (exponential polynomials),
    whose solvability is conditional to Schanuel's conjecture being true.
	To speed up the procedure, we employ the strategy of sampling.
	The original problem is shown to be equivalent to
	the existence of a finite collection of satisfying samples.
	We then present a sample-driven procedure,
	which can effectively refine the sample space after each time of sampling,
	no matter whether the sample itself is successful or conflicting.
	The improvement on efficiency is validated by randomly generated instances.
	Hence the proposed method would be promising to attack the repeated reachability problems
	together with checking other $\omega$-regular properties in a wide scope of real-time systems.

\keywords{reachability analysis
\and model checking
\and quantum computing
\and decision procedure
\and computer algebra
\and real-time system}
\end{abstract}

\section{Introduction}
As one of the most important stochastic processes in classical world,
the model of Markov chain has been extensively studied since the early 20th century.
To embrace an increased degree of realism to describe random events in practice,
continuous-time random variables are incorporated into this model
which is called the continuous-time Markov chain (CTMC).
Meanwhile,
the rapid advancement of quantum computing over the past few decades
has led to the flourishing of models in quantum world,
e.\,g.\@, quantum automaton~\cite{KoW97},
quantum discrete-time Markov chain (QDTMC)~\cite{FYY13a}
and quantum discrete-time Markov decision process (QDTMDP)~\cite{YiY18}.
For widely-applied CTMC,
researchers have shown great interest in its quantum analogues
--- quantum continuous-time Markov chain (QCTMC),
when it collides with quantum mechanics.

The motion planning problem can be interpreted on Markov models
to accomplish complex tasks within rigorous temporal constraints.
One way of expressing such tasks is to use various temporal logics,
such as computation tree logic (CTL)~\cite{AnM95}
and linear temporal logic (LTL)~\cite{LoK04}.
In the approach to checking LTL formulas,
the execution of the system is thought of a sequence of states or events.
This representation abstracts from the precise timing of observations,
retaining only the order on states or events.
Metric temporal logic (MTL)~\cite{OuW05} and signal temporal logic (STL)~\cite{MaN04}
are able to express the specification of systems in quantitative timing.
MTL is a temporal logic specified on a discrete-time specification,
while STL is a variant of MTL tailored to specify the properties of continuous-time signals.
In this paper, we consider real-time systems;
therefore STL is preferable.
The core of STL lies in reachability-like properties.

On the other hand, quantum mechanics allows people to understand an astonishing range of phenomena in the world,
such as superposition, mixture and entanglement.
These phenomena have been utilized to design many kinds of quantum systems and protocols~\cite{BeB84,QDD20},
and becoming more and more crucial in motion planning.
It is necessary to investigate the dynamics of such novel quantum systems. 
In this paper, we will study the repeated reachability problem in finite horizon on QCTMCs,
denoted the temporal formula $\Box^\I\,\Diamond^\J\,\Phi$,
which concerns whether the system at each \emph{absolute} time in $\I$
will meet the property $\Phi$ after some coming \emph{relative} time in $\J$.
Our contributions are three-fold:
\begin{enumerate}
    \item We show the decidable result by
		a reduction to the real root isolation of a class of real-valued functions (exponential polynomials),
		following the state-of-the-art number-theoretic tool --- Schanuel's conjecture~\cite{Ax71}.
    \item To speed up the procedure, we employ the strategy of sampling.
		The original problem is shown to be equivalent to
		the existence of a finite collection of satisfying samples.
    \item We then present a sample-driven procedure,
		which can effectively refine the sample space after each time of sampling,
		no matter whether the sample itself is successful or conflicting.
\end{enumerate}
The improvement on efficiency is validated by randomly generated instances.
Hence the proposed method would be promising in attacking the repeated reachability problems
together with checking other $\omega$-regular properties in a wide scope of real-time systems.

\subsection{Related Work on Reachability Problems}
Reachability analysis, embedded in the model-checking algorithms,
is essential to verify both classical systems and quantum ones.
We discuss them as follows.

\paragraph{Verification on classical MCs}
Early in 1985, Moshe Y. Vardi initiated the verification of probabilistic concurrent finite-state programs
(a.\,k.\,a.\@ discrete-time Markov chains, DTMCs)~\cite{Var85},
in which the reachability and the repeated reachability problems are expressed respectively
by LTL formulas $\Diamond\,\Phi$ and $\Box\,\Diamond\,\Phi$ for some static observable $\Phi$.
The general problem of probabilistic model checking
with respect to the $\omega$-regular specification was considered in~\cite{BRV04}.
LTL focuses on the properties of linear processes;
to specify the properties on branching processes, people resort to CTL~\cite{CES86}.
Hansson and Jonsson introduced probabilistic CTL (PCTL) by adding the probability-quantifier~\cite{HaB94} on DTMCs.
Almost sure repeated reachability is PCTL-definable and computable.

\paragraph{Verification on classical CTMCs}
The seminal work on verifying CTMCs is put forward by
Aziz \textit{et~al.}’s and Baier \textit{et~al.}’s~\cite{ASS+96,BKH99}.
They introduced continuous stochastic logic (CSL) interpreted on CTMCs,
which is the extension of discrete-time stochastic systems. 
In \cite{ASS+96}, probabilities are required to be rational numbers,
and the decidability of model checking for CSL is accomplished using number-theoretic results.
An approximate model checking algorithm for a reduced version of CSL was provided in~\cite{BKH99},
which restricted path formulas
from multiphase until formulas $\Phi_1 \ntl^{\I_1}\Phi_2 \cdots \ntl^{\I_k}\Phi_{k+1}$
for some integer $k\ge 1$
to binary until ones $\Phi_1 \ntl^{\I}\Phi_2$,
a kind of constrained reachability.
Under this logic, they applied efficient numerical techniques --- uniformization~\cite{Ste94} ---
for transient analysis~\cite{BHH+03}.
The approximate algorithms have been extended for multiphase until formulas using stratification~\cite{ZJN+11}. 
Xu \textit{et~al.} considered the multiphase until formulas over the CTMC with rewards~\cite{XZJ+16};
some positive progress was established by number-theoretic and algebraic methods. 
Recently, the gap between the exact and the approximate methods was bridged in~\cite{FeZ17}.
The inner temporal operator $\Diamond\,\Phi$ of $\Box\,\Diamond\,\Phi$ are qualitative,
and not interpreted well by branching processes, e.\,g.\@ CSL.
An alternative is considering linear processes as in~\cite{GuY22} and the current paper.

\paragraph{Verification on quantum MCs}
In 2013, Feng \textit{et~al.} initiated the verification of QDTMCs~\cite{FYY13a},
in which Markov chains are equipped with the quantum operations as transitions.
Under the model, the authors considered the reachability probability~\cite{YFY+13},
the repeated reachability probability~\cite{FHT+17},
and the model-checking of a quantum analogy of CTL~\cite{FYY13a} with extension~\cite{XFM+22}. 
An exact method was developed in~\cite{XHF21}
to solve the constrained reachability problem for QDTMCs.
Recently, Xu \textit{et~al.} investigated the novel model of QCTMCs,
on which the decidability of the STL formula was established in~\cite{XMGY21}
by a reduction to real root isolation of exponential polynomials.
STL concerns the reachability of linear processes,
and CSL concerns the reachability of branching processes,
whose decidability was settled in~\cite{MXG+22}.
Whereas, as far as we know,
the repeated reachability has not been considered on QCTMCs.

Finally those seminal work on verifying various Markov models are summarized in Table~\ref{table:works},
in which the time complexity is specified in the size of the input model.
When we deal with continuous-time Markov models,
it is essential to sufficiently approach the Euler constant $\mathrm{e}$ for establishing the decidability,
which is left as a common oracle of the existing methods.
\begin{table}[ht]
\caption{Seminal work on verifying various Markov models}\label{table:works}
\centering
\begin{tabular}{p{3cm}<{\centering}ccp{2cm}<{\centering}}
\toprule
Markov models & reachability & repeated reachability \\
\midrule
DTMC & polynomial time~\cite{Var85} & polynomial time~\cite{Var85} \\
CTMC & decidable~\cite{ASS+96} &  not available \\
QDTMC & polynomial time~\cite{YFY+13} & polynomial time~\cite{FHT+17} \\
QCTMC & decidable~\cite{XMGY21} & the present paper \\
\bottomrule
\end{tabular}
\end{table}

\paragraph*{Organization}
The rest of this paper is organized as follows.
Section~\ref{S2} reviews basic notions and notations from quantum computing,
together with the model of QCTMC and STL.
We state our main result on the repeated reachability problem in Section~\ref{S4},
and give the detailed construction respectively in Sections~\ref{S5} \&~\ref{S6}.
Section~\ref{S7} delivers the experimentation.
The paper is concluded in Section~\ref{S8}.

\section{Preliminaries}\label{S2}
\subsection{Basic Notions and Notations}
Let $\h$ be a finite-dimensional Hilbert space
that is a complete vector space over complex numbers $\mathbb{C}$ equipped with an inner product operation
throughout this paper,
and $d$ its dimension.
We recall the standard Dirac notations from quantum computing.
Interested readers can refer to~\cite{NiC00} for more details.
\begin{itemize}
	\item $\ket{\psi}$ denotes a unit column vector in $\h$ labelled with $\psi$;
	\item $\bra{\psi}:=\ket{\psi}^\dag$ is the Hermitian adjoint
	(conjugate transpose entrywise) of $\ket{\psi}$;
	\item $\ip{\psi_1}{\psi_2}:=\bra{\psi_1} \ket{\psi_2}$
	is the inner product of $\ket{\psi_1}$ and $\ket{\psi_2}$;
	\item $\op{\psi_1}{\psi_2}:=\ket{\psi_1} \otimes \bra{\psi_2}$ is the outer product
	where $\otimes$ denotes tensor product;
	\item $\ket{\psi,\psi'}$ is a shorthand of
	the tensor product $\ket{\psi}\ket{\psi'}=\ket{\psi}\otimes\ket{\psi'}$.
\end{itemize}

\paragraph*{Quantum state}
Let $\gamma$ be a linear operator on $\h$.
It is \emph{Hermitian} if $\gamma=\gamma^\dag$;
it is \emph{positive} if $\bra{\psi}\gamma\ket{\psi} \ge 0$ holds for any vector $\ket{\psi}\in\h$.
A \emph{projector} $\PP$ is a positive operator of
the form $\sum_{i=1}^m \op{\psi_i}{\psi_i}$ with $m\le d$,
where $\ket{\psi_i}$ ($i=1,2,\dots,m$) are orthonormal.
It implies that all eigenvalues of $\PP$ are in $\{0,1\}$.
The \emph{trace} of a linear operator $\gamma$ is defined as
$\tr(\gamma):=\sum_{i=1}^d \bra{\psi_i} \gamma \ket{\psi_i}$
for any orthonormal basis $\{\ket{\psi_i}:i=1,2,\dots,d\}$ of $\h$.
A \emph{density} operator $\rho$ is a positive operator with unit trace.
Let $\den$ be the set of density operators.
For a density operator $\rho$,
we have the spectral decomposition
$\rho=\sum_{i=1}^m \lambda_i \op{\lambda_i}{\lambda_i}$
where $\lambda_i$ ($i=1,2,\dots,m$) are positive eigenvalues.
We call such eigenvectors $\ket{\lambda_i}$ \emph{eigenstates} of $\rho$ explained below.
The density operators are usually used to describe quantum states.
We assume the quantum system is in that decomposition,
meaning that it is in state $\ket{\lambda_i}$,
which is an event occurring with probability $\lambda_i$.
When $m=1$, we know that the system is surely in state $\ket{\lambda_1}$ (with probability one),
which is a so-called \emph{pure} state;
otherwise the state is \emph{mixed}.
Both the vector notation $\ket{\lambda_i}$ and the outer product notation $\op{\lambda_i}{\lambda_i}$
could be used to denote pure states;
it is preferable to use the matrix notation $\rho$ to denote mixed states.

We will review the quantum operation on quantum states,
embedded in the following model description of quantum continuous-time Markov chain.

\subsection{Quantum CTMC}
We now introduce the model of quantum CTMCs.
\begin{definition}\label{def:QCTMC}
	A quantum continuous-time Markov chain (QCTMC for short) $\QC$
	is given by a pair $(\h,\LLL)$, in which
	\begin{itemize}
		\item $\h$ is the Hilbert space,
		\item $\LLL$ is the transition generator function given by
		a Hermitian operator $\HH$
		and a finite set of linear operators $\LL_j$ on $\h$.
	\end{itemize}
	Usually, a density operator $\rho(0) \in \den$
	is appointed as the initial state of $\QC$.
\end{definition}

In the model,
the transition generator function $\LLL$ gives rise to a \emph{universal} way
to describe the continuous-time dynamics of the QCTMC,
following 
the Lindblad's master equation~\cite{Lin76,GKS76}
\begin{equation}\label{eq:Lindblad}
\begin{aligned}
    \rho'(t) & =\LLL(\rho(t)) \\
	& =-\imath\HH\rho(t)+\imath\rho(t)\HH
	+\sum_{j=1}^m \left(\LL_j\rho(t) \LL_j^\dag
	-\tfrac{1}{2}\LL_j^\dag\LL_j\rho(t)-\tfrac{1}{2}\rho(t)\LL_j^\dag\LL_j\right).
\end{aligned}
\end{equation}
The above equation is general enough to describe the dynamics of open systems.
If we are concerned with a \emph{closed} system
(an ideal system that does not suffer from any unwanted interaction from outside environment),
state transitions can be characterized by the Schr\"odinger equation:
\begin{subequations}
\begin{equation}\label{eq:schrodinger}
	\frac{\dd\,\ket{\psi(t)}}{\dd\, t} = -\imath \HH\ket{\psi(t)}.
\end{equation}
where $\ket{\psi(t)}$ is the state of the system at time $t$,
and $\HH$ is a Hermitian operator called \emph{Hamiltonian}.
We can reformulate it with matrix notation as
\begin{equation}
	\rho'(t) =-\imath\HH\rho(t)+\imath\rho(t)\HH,
\end{equation}
\end{subequations}
where $\rho=\op{\psi}{\psi}$.
An \emph{open} system interacts with environment.
Composed with the environment, the large system is closed;
by tracing out the environment of the large system,
it is characterized by Eq.~\eqref{eq:Lindblad} where $\LL_j$ are a few linear operators.
For the consideration of computability,
the entries of $\HH$ and $\LL_j$ are supposed to be \emph{algebraic} numbers
that are roots of the polynomials with rational coefficients.
For instance, the number $\tfrac{1}{3}-\imath\sqrt{2}$ is algebraic
since it is a root of $x^2-\tfrac{2}{3}x+\tfrac{19}{9}$.

\begin{example}\label{ex1}
    We consider a sample QCTMC $\QC_1=(\h,\LLL)$,
    in which $\h$ is a Hilbert space over two qubits,
	i.\,e.\@ a $4$-dimensional vector space,
    and the transition function $\LLL$ is given by
	\begin{itemize}
	\item the Hermitian operator $\HH=X\otimes X$, and
	\item the set of linear operators $\{\LL_1, \LL_2\}$ with $\LL_1 =X \otimes H$
    and $\LL_2=H\otimes X$.
	\end{itemize}
    Here, $H$ is the Hadamard operator $\op{+}{0}+\op{-}{1}$
	with $\ket{\pm}=(\ket{0}\pm\ket{1})/\sqrt{2}$,
    and $X$ is the Pauli-X operator $\op{1}{0}+\op{0}{1}$.
    Once the initial state $\rho(0)$ is fixed,
    the dynamics of $\QC_1$ is entirely determined by Lindblad's master equation
\[
    \rho'(t) = -\imath\HH\rho(t)+\imath\rho(t)\HH
	+\sum_{j=1}^2 \left(\LL_j\rho(t) \LL_j^\dag
	-\tfrac{1}{2}\LL_j^\dag\LL_j\rho(t)-\tfrac{1}{2}\rho(t)\LL_j^\dag\LL_j\right).
\]
\end{example}

To know what the actual state $\rho(t)$ of a QCTMC $\QC$ is,
one has to use the \emph{quantum (projective) measurement},
that is a collection of projectors $\PP_i$ with index $i$ taken from a finite set $I$,
satisfying $\sum_{i \in I} \PP_i = \id$.
After measurement,
the resulting state will be \emph{collapsed} to $\rho_i=\PP_i \rho\PP_i/p_i$
where $p_i=\tr(\PP_i \rho)$.
That is, quantum measurement destroys the states $\rho(t)$,
which is an important postulate of quantum mechanics~\cite[Section~2.2]{NiC00}.
To avoid such an unwanted change in $\rho(t)$,
we will adopt the manner of static analysis,
which uses some temporal logic to specify the properties of $\QC$
at a starting state $\rho(t_0)$ as in the following.

\subsection{Signal Temporal Logic}
Now we recall the syntax and the semantics of signal temporal logic (STL)
that is used to formally describe the repeated reachability properties in this paper.

\begin{definition}[\cite{MaN04}]\label{def:syntax}
	The syntax of the STL formulas are defined as follows:
	\[
	\Psi := \Phi \mid \neg\Psi \mid \Psi_1 \wedge \Psi_2 \mid \Psi_1 \ntl^\I \Psi_2
	\]
	in which the atomic propositions $\Phi$,
	interpreted as \emph{signals},
	are of the form $p(\x) \in \II$
	where $p$ is a $\mathbb{Q}$-polynomial in $\x=(x_1,\ldots,x_n)$
	with $x_i=\tr(\PP_i (\,\cdot\,))$ for some projector $\PP_i$
	and $\II$ is a rational value interval,
	and $\I$ is a finite time interval.
\end{definition}
As a special case, when $p$ is a convex combination over $\x$,
$p(\x)$ results in an \emph{observable} in quantum computing.
Here, $\ntl$ is called the until operator
and $\Psi_1 \ntl^\I \Psi_2$ is the until formula.

\begin{definition}\label{def:semantics}
	The semantics of the STL formulas are interpreted on the QCTMC $\QC$ in Definition~\ref{def:QCTMC}
	by the satisfaction relation $\models_\QC$ (or $\models$ for short):
	\[
	\begin{aligned}
		\rho(t) & \models \Phi
		&& \textup{if } p(\x) \in \II \textup{ holds for } x_i=\tr(\PP_i \cdot \rho(t)), \\
		\rho(t) & \models \neg\Psi
		&& \textup{if } \rho(t) \not\models \Psi, \\
		\rho(t) & \models \Psi_1 \wedge \Psi_2
		&& \textup{if } \rho(t) \models \Psi_1 \wedge \rho(t) \models \Psi_2, \\
		\rho(t) & \models \Psi_1 \ntl^\I \Psi_2
		&& \textup{if there exists a real number }t^* \in \I, \\
		&&& \textup{such that }\forall\,t' \in [t,t+t^*) \,:\, \rho(t') \models \Psi_1
		\textup{ and }\rho(t+t^*) \models \Psi_2.
	\end{aligned}
	\]
\end{definition}

The STL formulas are used to express a rich class of real-time properties.
For example,
the time-bounded reachability $\Diamond^\I\,\Psi_2$ is
a special case of the until formula $\Psi_1 \ntl^\I \Psi_2$
by setting $\Psi_1=\true$;
the time-bounded safety $\Box^\I\,\Psi$ amounts to $\neg\Diamond^\I\,(\neg\Psi)$.
So the repeated reachability in finite horizon, $\Box^\I\,\Diamond^\J\,\Phi$,
can also be expressed by the STL formula
$\neg \Diamond^\I\,(\neg \Diamond^\J\,\Phi) \equiv \neg (\true \ntl^\I (\neg (\true \ntl^\J \Phi)))$.

\section{Main Result}\label{S4}
In this paper,
we will study the repeated reachability problem in finite horizon, $\Box^\I\,\Diamond^\J\,\Phi$,
over the model of QCTMC $\QC$.
The outline of our approach is described in this section,
and more details will be provided in the coming sections.

We first rewrite the repeated reachability as
\[
	\Box^\I\,\Diamond^\J\,\Phi \equiv \neg\Diamond^\I\,(\neg\Diamond^\J\,\Phi),
\]
which is an STL formula and thus can be solved by the recent work~\cite{XMGY21}.
We give some hints on the \emph{solvability},
together with a review of the known approach.
\begin{enumerate}
	\item The \emph{instantaneous description} (ID) of $\QC$ can be obtained in polynomial time
		as a density operator function $\rho(t)$ w.\,r.\,t.\@ absolute time variable $t$.
		(Here, we introduce the fresh notion of IDs for dynamical states of $\QC$,
		which can be distinguished from the static states of the Hilbert space $\h$.)
	\item The atomic propositions $\Phi$ that represent signals in the real-time system
		are polynomial constraints in the outcome probabilities by projecting $\rho(t)$.
	\item We extract the exponential-polynomial (named \emph{observing} expression) $\phi(t)$
		from the signal $\Phi$,
		so that $\rho(0)$ meets $\Box^\I\,\Diamond^\J\,\Phi$
		if and only if $\phi(t)$ meets some sign-conditions,
		e.\,g.\@, $\phi>0$ or $\phi \ge 0$ holds in some appropriate time intervals just mentioned below.
	\item The post-monitoring period is determined as
	\[
		\BB_0:=[\inf\I+\inf\J,\sup\I+\sup\J],
	\]
		during which the sign information of $\phi(t)$ suffices to
		decide $\rho(0)\models\Box^\I\,\Diamond^\J\,\Phi$.
    \item After finding out all real roots $\lambda$ of $\phi(t)$ in $\BB_0$,
        we can obtain all solution time intervals $\delta_i$ during which $\rho(t)\models\Phi$ holds,
        whose endpoints are taken from the real roots $\lambda$ of $\phi(t)$.
        For each solution interval $\delta_i$,
        we have $\rho(t)\models\Diamond^\J\,\Phi$ holds
		for $t\in \I_i=\{t_1-t_2: t_1\in \delta_i \wedge t_2\in\J\}$.
	\item Hence we have reduced deciding $\rho(0) \models \Box^\I\,\Diamond^\J\,\Phi$
		to the real root isolation of $\phi(t)$ in $\BB_0$, a finite time interval.
		Particularly, the repeated reachability problem in finite horizon amounts to determining
        whether the union of all afore-calculated $\I_i$ covers $\I$.
\end{enumerate}

Then, we present new results for seeking more \emph{efficiency}.
A necessary and sufficient condition to $\rho(0) \models \Box^\I\,\Diamond^\J\,\Phi$
can be derived as
the existence of a finite collection $\mathbb{T}$ of absolute times $t^* \in \BB_0$,
satisfying that $\Phi$ holds at each $t^*$ and
the associated switch times
w.\,r.\,t.\@ the outer temporal operator $\Box^\I\,(\,\cdot\,)$
of $\Box^\I\,\Diamond^\J\,\Phi$ covers $\I$,
i.\,e.\@,
\[
	\I \subseteq \bigcup_{t^*\in\mathbb{T}} [t^*-\sup\J, t^*-\inf\J].
\]
Here, the time interval $\J$ is assumed to be closed for convenience.
Otherwise, we need to amend the intervals $[t^*-\sup\J, t^*-\inf\J]$
appearing in the RHS of the above inclusion
with appropriate endpoint conditions.
We employ a sample-driven procedure
by validating $\Phi$ with a few numerical samples $t^* \in \BB_0$.
After each time of sampling,
two straightforward criteria could be applied:
\begin{itemize}
	\item a successful sample $\rho(t^*)\models\Phi$ produces
		the segment $[t^*-\sup\J, t^*-\inf\J]$ (partially) covering $\I$;
	\item a conflicting sample $\rho(t^*)\not\models\Phi$ entails that
		$\Phi$ holds nowhere of a truth-invariant neighborhood $\delta$ of $t^*$.
		Thus we can safely exclude $\delta$ from the sample space $\BB$,
		which is initialized as the post-monitoring period $\BB_0$.
\end{itemize}
Repeat the sampling process until $\I$ has been completely covered
or the resulting $\BB$ is empty.
The termination is guaranteed when Schanuel's conjecture~\cite{Ax71} holds.
Checking $\rho(t^*)\models\Phi$ for \emph{concrete} $t^*$ is much cheap than
solving $\rho(t)\models\Phi$ w.\,r.\,t.\@ \emph{variable} $t$.
It is likely to yield an efficient decision procedure.
The improvement in intuition would be validated by randomly generated instances.

\section{Solvability by Real Root Isolation}\label{S5}
In this section,
we utilize the known results~\cite[Algorithm~1 \& Theorem~19]{XMGY21}
for solving the repeated reachability problem.
We first define two useful functions:
\begin{itemize}
	\item $\lv(\gamma):=\sum_{i=1}^d \sum_{j=1}^d \bra{i}\gamma\ket{j} \ket{i,j}$
	that rearranges entries of the linear operator $\gamma$
	on the Hilbert space $\h$ with dimension $d$
	as a $d^2$-dimensional column vector;
	\item $\vl(\mathbf{v}):= \sum_{i=1}^d \sum_{j=1}^d \bra{i,j} \mathbf{v} \op{i}{j}$
	that rearranges entries of the $d^2$-dimensional column vector $\mathbf{v}$
	as a linear operator on $\h$.
\end{itemize}
Here, $\lv$ and $\vl$ are pronounced
``linear operator to vector'' and ``vector to linear operator'', respectively.
They are mutually inverse functions,
so that
if a linear operator (resp.~its vectorization) is determined,
its vectorization (resp.~the original linear operator) is determined.
Hence, we can freely choose one of the two representations for convenience.
It is not hard to validate that for any linear operators $\mathbf{A}$, $\mathbf{B}$ and $\mathbf{C}$,
the matrix product $\mathbf{D}=\mathbf{A}\mathbf{B}\mathbf{C}$ has the transformation
\[
	\lv(\mathbf{D})=(\mathbf{A} \otimes \mathbf{C}^\T)\lv(\mathbf{B}),
\]
where $\T$ denotes transpose.

Based on the above notations and transformation,
the ID $\rho(t)$ characterized by the Lindblad's master equation~\eqref{eq:Lindblad}
can be rearranged as the ordinary differential equation
\begin{equation}\label{eq:ODE}
	\lv(\rho') = \M \cdot\lv(\rho),
\end{equation}
where
\[
	\M= -\imath\HH\otimes\id+\imath\id\otimes\HH^\T
	+\sum_{j=1}^m \left(\LL_j\otimes\LL_j^{*}  -\tfrac{1}{2}\LL_j^\dag\LL_j\otimes\id
	-\tfrac{1}{2}\id\otimes \LL_j^\T\LL_j^*\right)
\]
is called the \emph{governing} matrix for the Lindblad operator $\LLL$.
Its closed-form solution is given by
\begin{equation}
	\rho(t)=\exp(\LLL,t)(\rho(0))=\vl(\exp(\M\cdot t)\cdot\lv(\rho(0)))
\end{equation}
in standard literature, e.\,g.\@~\cite[Subsection~2.5.1]{Kai80},
and can be computed in polynomial time to get the explicit value.

To get information from a quantum system,
we would like to use a collection of projectors $\PP_1,\PP_2,\ldots,\PP_n$
to define the real-valued functions $x_i(t)=\tr(\PP_i \cdot \rho(t))$.
Namely, we have:
\begin{lemma}\label{lem:EP}
	Let $\rho(t)$ be the solution of the Lindblad's master equation~\eqref{eq:Lindblad},
	and $\PP$ a projector.
	Then $x(t)=\tr(\PP \cdot \rho(t))$ is a real-valued exponential polynomial 
	with the form
	\begin{equation}\label{eq:EP}
		\beta_1(t) \exp(\alpha_1 t) + \beta_2(t) \exp(\alpha_2 t) + \cdots + \beta_m(t) \exp(\alpha_m t),
	\end{equation}
	where $\alpha_1,\ldots,\alpha_m$ are distinct $\mathbb{A}$-numbers
	and $\beta_1(t),\ldots,\beta_m(t)$ are $\mathbb{A}$-polynomials.
\end{lemma}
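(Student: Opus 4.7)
The plan is to start from the closed-form solution displayed just above the lemma, namely $\rho(t) = \vl(\exp(\M \cdot t) \cdot \lv(\rho(0)))$, and unpack the trace entrywise. A short computation from the definitions of $\lv$ and $\tr$ gives the pairing identity $\tr(\PP \cdot \gamma) = \lv(\PP^\T)^\T \cdot \lv(\gamma)$, so I would rewrite
\[
    x(t) = \tr(\PP \cdot \rho(t)) = \lv(\PP^\T)^\T \cdot \exp(\M \cdot t) \cdot \lv(\rho(0)).
\]
This exhibits $x(t)$ as a fixed linear form in the entries of the matrix exponential $\exp(\M \cdot t)$ applied to a fixed algebraic vector, so all the analytic features of $x(t)$ reduce to the structure of $\exp(\M \cdot t)$.

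Next I would analyse $\exp(\M \cdot t)$ via the Jordan decomposition $\M = S J S^{-1}$ performed over $\overline{\mathbb{Q}}$. Since $\HH$ and every $\LL_j$ have algebraic entries by hypothesis, so does $\M$; hence its characteristic polynomial has algebraic coefficients, its eigenvalues $\alpha_1,\ldots,\alpha_m$ are algebraic numbers (taken distinct after merging repeated spectra), and $S$ together with $S^{-1}$ can be chosen with algebraic entries. For a Jordan block of eigenvalue $\alpha_k$ and size $n_k$ the exponential is $\exp(\alpha_k t) \sum_{\ell=0}^{n_k-1} (N t)^\ell/\ell!$ with $N$ the standard nilpotent part, so each entry of $\exp(\M \cdot t)$ assumes the form $\sum_{k=1}^m p_{ij,k}(t) \exp(\alpha_k t)$ with $p_{ij,k}$ a polynomial having algebraic coefficients.

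Substituting back into the linear form for $x(t)$ and collecting terms exponent by exponent yields $x(t) = \sum_{k=1}^m \beta_k(t) \exp(\alpha_k t)$, where each $\beta_k(t)$ is an $\mathbb{A}$-linear combination of the $p_{ij,k}(t)$ and therefore itself an $\mathbb{A}$-polynomial, which is the requested normal form~\eqref{eq:EP}. Real-valuedness comes for free: because $\PP$ and $\rho(t)$ are both Hermitian, $x(t)^\ast = \tr((\PP \rho(t))^\dag) = \tr(\rho(t) \PP) = x(t)$, so $x(t) \in \mathbb{R}$ for every $t$; if desired one can afterwards pair complex-conjugate exponents to exhibit the real form explicitly.

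The main obstacle is certifying that the polynomial coefficients $\beta_k(t)$ really lie in $\overline{\mathbb{Q}}[t]$, since Jordan form over $\overline{\mathbb{Q}}$ involves generalised eigenvectors whose algebraic control needs a little care (one must argue that kernels of $(\M - \alpha_k \id)^{n_k}$ admit bases with algebraic entries). The cleanest workaround is Putzer's representation $\exp(\M \cdot t) = \sum_{k=0}^{d^2-1} r_k(t) \M^k$, in which the scalar functions $r_k(t)$ are determined by a triangular linear ODE whose only data are the eigenvalues of $\M$; its solutions are manifestly exponential polynomials with algebraic exponents and algebraic polynomial coefficients, so Putzer's formula delivers the required algebraic-ness of $\beta_k(t)$ without explicit Jordan bookkeeping.
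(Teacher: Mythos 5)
Your proposal is correct and takes essentially the same route as the paper, which argues in three steps: the algebraic entries of $\HH$ and $\LL_j$ make the governing matrix $\M$ and hence its eigenvalues algebraic, the closed-form solution $\rho(t)=\vl(\exp(\M\cdot t)\cdot\lv(\rho(0)))$ forces every entry of $\rho(t)$, and thus $x(t)=\tr(\PP\cdot\rho(t))$, to be an exponential polynomial of the form~\eqref{eq:EP}, and the Hermitian structure of $\rho(t)$ gives real-valuedness. The paper leaves these facts as a sketch; your Jordan-form analysis (or the Putzer workaround for algebraicity of the polynomial coefficients) simply fills in the details the paper treats as routine.
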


Recall~\cite[Corollary~4.1.5]{Coh96} that
roots of the polynomials with algebraic coefficients ($\mathbb{A}$-polynomials)
are still algebraic numbers ($\mathbb{A}$-numbers).
So the lemma follows from the facts:
\begin{enumerate}
    \item The entries of $\HH$ and $\LL_j$ are $\mathbb{A}$-numbers,
    so are the entries of the governing matrix $\M$,
	implying that the eigenvalues of $\M$ are $\mathbb{A}$-numbers.
	\item The entries of $\rho(t)$ in closed form are exponential polynomials with the form~\eqref{eq:EP},
	as well as the entries of $\PP \cdot \rho(t)$ and $x(t)=\tr(\PP \cdot \rho(t))$.
	\item The Hermitian structure of $\rho(t)$ ensures that $x(t)$ is real-valued.
\end{enumerate}

\begin{example}\label{ex2}
    Here we continue to consider the QCTMC $\QC_1$ described in Example~\ref{ex1}.
    Let the initial ID $\rho(0)$ be $\op{00}{00}$.
    After solving the ordinary differential equation $\lv(\rho')=\M \cdot \lv(\rho)$
    where the governing matrix $\M$ is given by
\[
    -\imath\HH\otimes\id+\imath\id\otimes\HH^\T
	+\sum_{j=1}^2 \left(\LL_j\otimes\LL_j^{*} -\tfrac{1}{2}\LL_j^\dag\LL_j\otimes\id
	-\tfrac{1}{2}\id\otimes \LL_j^\T\LL_j^*\right),
\]
    we obtain the closed-form solution 
\[
\begin{aligned}
    \rho(t)=\ & [\tfrac{3}{8} +\tfrac{1}{4}\exp(-(2+2\imath) t)
        +\tfrac{1}{4}\exp(-(2-2\imath) t) +\tfrac{1}{8}\exp(-4 t)]\op{00}{00} \ + \\
    & [\tfrac{1}{8} -\tfrac{1}{4} \exp(-(2+2\imath) t)
        +\tfrac{1}{4} \exp(-(2-2\imath) t) - \tfrac{1}{8}\exp(-4 t)]\op{00}{11} \ + \\
    & [\tfrac{1}{8} +\tfrac{1}{4} \exp(-(2+2\imath) t)
        -\tfrac{1}{4} \exp(-(2-2\imath) t) - \tfrac{1}{8}\exp(-4 t)]\op{11}{00} \ + \\
    & [\tfrac{3}{8} -\tfrac{1}{4}\exp(-(2+2\imath) t)
        -\tfrac{1}{4}\exp(-(2-2\imath) t)+\tfrac{1}{8}\exp(-4 t) ]\op{11}{11}\ +\\
    & [\tfrac{1}{8}-\tfrac{1}{8}\exp(-4 t)](\op{01}{01}+\op{01}{10}+\op{10}{01}+\op{10}{10}).\\
\end{aligned}
\]
To get the probabilities of the two qubits staying respectively
in the basis states $\ket{00},\ket{01},\ket{10},\ket{11}$,
we choose the projectors $\PP_{i,j}=\op{i,j}{i,j}$ with $i,j\in\{0,1\}$, trace out $\rho(t)$,
and get
\[
\begin{aligned}
    x_1 & =\mathrm{tr}(\PP_{0,0}\cdot\rho(t))
    =\tfrac{3}{8} +\tfrac{1}{4}\exp(-(2+2\imath) t) +\tfrac{1}{4}\exp(-(2-2\imath) t)+\tfrac{1}{8}\exp(-4 t), \\
    x_2 & =\mathrm{tr}(\PP_{0,1}\cdot\rho(t))=\tfrac{1}{8}-\tfrac{1}{8}\exp(-4 t), \\
    x_3 & =\mathrm{tr}(\PP_{1,0}\cdot\rho(t))=\tfrac{1}{8}-\tfrac{1}{8}\exp(-4 t), \\
    x_4 & =\mathrm{tr}(\PP_{1,1}\cdot\rho(t))
    =\tfrac{3}{8}
    -\tfrac{1}{4}\exp(-(2+2\imath) t) -\tfrac{1}{4}\exp(-(2-2\imath) t)+ \tfrac{1}{8}\exp(-4 t),
\end{aligned}
\]
which will be used to make up the signals to be checked.
It is not hard to see that
all entries in $\rho(t)$ are exponential-polynomials with the form~\eqref{eq:EP}.
The same holds for $x_1$ through $x_4$,
which are additionally real-valued as $x_i^*=x_i$. \qed

\end{example}

Those exponential polynomials $x_1(t),x_2(t),\ldots,x_n(t)$
are basic ingredients to make up the atomic propositions $\Phi$ in STL.
Then, given an atomic proposition $\Phi \equiv p(\x) \in \II$
(assuming $\II$ is bounded),
we need to know the algebraic structure of the \emph{observing} expression
\begin{equation}\label{eq:EP1}
	\phi(t)=(p(\x(t))-\inf\II)(p(\x(t))-\sup\II),
\end{equation}
with which we will design an algorithm for solving the constraint $\Phi \equiv p(\x) \in \II$.
The structure of $\phi(t)$ depends on those of $x_i(t)=\tr(\PP_i \cdot \rho(t))$.
The latter $x_i(t)$ are exponential polynomials with the form~\eqref{eq:EP},
as well as $p(\x)$ and $\phi(t)$,
since they are polynomials in variables $\x$.
If $\II$ is unbounded from below (resp.~above),
the left (resp.~right) factor could be removed from~\eqref{eq:EP1} for further consideration.

Now we have reduced the truth of $\rho(t)\models \Phi$ with $\Phi\equiv p(\x) \in \II$
to determining the real roots of $\phi(t)$,
which can be completed by the real root isolation algorithm~\cite[Algorithm~1]{XMGY21}
\[
	\{\BB_1,\ldots,\BB_m\} \Leftarrow \textsf{Isolate}(\phi,\BB),
\]
in which the input $\phi(t)$ is a real-valued exponential polynomial defined on a rational interval $\BB=[l,u]$,
and the output $\BB_1,\ldots,\BB_m$ are finitely many disjoint intervals
such that each contains one real root of $\phi$ in $\BB$,
together contain all.

\begin{example}\label{ex3}
	Continuing to consider Example~\ref{ex2},
    we study the decision problem ---
    whether the atomic proposition $\Phi \equiv x_2-x_1^2>0$ 
	with $x_1 =\mathrm{tr}(\PP_{0,0}\cdot\rho(t))$
	and $x_2 =\mathrm{tr}(\PP_{0,1}\cdot\rho(t))$ holds for some time in $\BB=[0,3]$.
    The observing expression is
\[
\begin{aligned}
    \phi(t) =\ & x_2(t) - x_1^2(t) \\
    =\ & -\tfrac{1}{64} -\tfrac{3}{16}\exp(-(2+2\imath)t) -\tfrac{3}{16}\exp(-(2-2\imath)t)
    -\tfrac{1}{16}\exp(-(4+4\imath)t) \\
    & -\tfrac{1}{16}\exp(-(4-4\imath)t) -\tfrac{1}{16}\exp(-(6+2\imath)t)-\tfrac{1}{16}\exp(-(6-2\imath)t) \\
    & -\tfrac{11}{32}\exp(-4t) -\tfrac{1}{64}\exp(-8t).
\end{aligned}
\]
The polynomial representation of $\phi(t)$ is bivariated in $\exp(-t)$ and $\exp(\imath t)$,
as the numbers $1$ and $\imath$ in exponents are $\mathbb{Q}$-linearly independent.
Since $\phi(t)$ is irreducible,
it has neither rational root nor repeated root. 
After invoking~\cite[Algorithm~1]{XMGY21} on $\phi(t)$ with $\BB$,
we obtain two isolation intervals $[\tfrac{789}{800},\tfrac{1581}{1600}]$
(containing real root $\lambda_1 \approx 0.987368$, also see Fig.~\ref{fig1})
and $[\tfrac{39}{25},\tfrac{2499}{1600}]$ (containing $\lambda_2 \approx 1.56093$),
which could be easily refined up to any precision.
We can see that $(\lambda_1,\lambda_2) \cap \BB$ is a nonempty interval,
on which $\phi(t)$ is positive and $\Phi$ holds.
Hence the aforementioned decision problem is decided to be true. \qed

\begin{figure}[htb]
	\centering
\scalebox{0.9}{
\begin{tikzpicture}[spy using outlines=
{rectangle, magnification=3,rounded corners=3pt, connect spies}]
\tikzset{
every pin/.style={fill=yellow!50!white,rectangle,rounded corners=3pt,font=\tiny},
small dot/.style={fill=blue,circle,scale=0.1},
}
\begin{axis}[no markers,
every axis plot post/.append style={very thin},
ymin=-1, ymax=0.11,
axis y line=left,
clip=false,
axis x line=bottom,]
\addplot[blue, thin]table[x=x,y=y]{./data/data_ex1.txt};
\coordinate (spypoint) at (0.987368,0);
\coordinate (magnifyglass) at (0.987368,-0.4);
\coordinate (spypoint1) at (1.56093,0);
\coordinate (magnifyglass1) at (2,-0.6);
\draw[red,very  thin](0,0)--(3,0);
\node [small dot,pin=-60:{$0.987368$}] at (0.987368,0) {};
\node [small dot,pin=-60:{$1.56093$}] at (1.56093,0) {};
\end{axis}
\spy [blue,height = 0.5cm, width = 2cm] on (spypoint)
in node[fill=white, thin] at (magnifyglass);
\spy [blue,height = 0.5cm, width = 4.5cm] on (spypoint1)
in node[fill=white,thin] at (magnifyglass1);
\end{tikzpicture}
}
	\caption{Real roots of the observing expression $\phi(t)$}\label{fig1}
\end{figure}
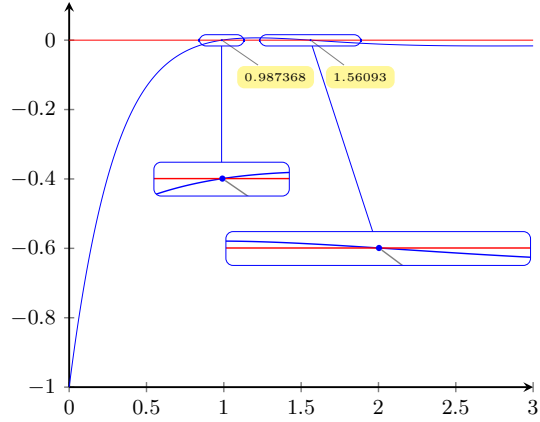
\end{example}

\begin{remark}
As a key step in the above isolation algorithm,
we need to compute $\mathbb{Q}$-linearly independent basis $\mu_1,\ldots,\mu_k$ of
the $\mathbb{A}$-exponents $\alpha_1,\ldots,\alpha_m$,
so that the original exponential polynomial $\phi(t)$ could be converted to a multivariate polynomial
that shares the same zeros.
The efficient Lenstra--Lenstra--Lovasz (LLL) algorithm~\cite{LLL82} could be applied here
to get the basis $\mu_1,\ldots,\mu_k$ of the Abelian group generated by $\alpha_1,\ldots,\alpha_m$.
\end{remark}

\begin{remark}\label{rmk2}
In fact, Schanuel's conjecture~\cite{Ax71} is a powerful tool to treat roots of
the general exponential polynomial.
For some special subclasses of exponential polynomials,
there are solid theorems to treat them:
one is Lindenman's theorem that has been employed in~\cite{AMW08}
for the exponential polynomials in the ring $\mathbb{Q}[t,\exp(t)]$,
the other is the Gelfond--Schneider theorem employed in~\cite[Subsection 4.1]{HLX+18}
for the exponential polynomials in $\mathbb{Q}[\exp(\mu_1t),\exp(\mu_2t)]$
where $\mu_1$ and $\mu_2$ are two $\mathbb{Q}$-linear independent real algebraic numbers.
However, they fail to isolate roots of elements in $\mathbb{Q}[t,\exp(\mu_1t),\ldots,\exp(\mu_kt)]$
for arbitrarily many $\mathbb{Q}$-linear independent complex algebraic numbers $\mu_1,\ldots,\mu_k$,
as considered in~\eqref{eq:EP}.
\end{remark}

For an STL formula $\Psi$,
independent from the starting time $t_0$,
the truth of $\rho(t_0)\models\Psi$ is affected
by the IDs $\rho(t)$ during a time period $t \in [t_0+0,t_0+\mnt(\Psi)]$.
We call it the \emph{post-monitoring} period where $\mnt(\Psi)$ is calculated as
\begin{equation}\label{eq:EP6}
	\begin{cases}
		0 & \textup{if }\Psi=\Phi, \\
		\mnt(\Psi_1) & \textup{if }\Psi=\neg\Psi_1, \\
		\max(\mnt(\Psi_1),\mnt(\Psi_2))	& \textup{if }\Psi=\Psi_1 \wedge \Psi_2, \\
		\sup\I+\max(\mnt(\Psi_1),\mnt(\Psi_2)) & \textup{if }\Psi=\Psi_1 \ntl^\I \Psi_2.
	\end{cases}
\end{equation}
For the repeated reachability $\Box^\I\,\Diamond^\J\,\Phi$,
the post-monitoring period can be simplified to $[\inf\I+\inf\J,\sup\I+\sup\J]$,
which will be the input time interval $\BB$ of the above isolation algorithm $\textsf{Isolate}(\phi,\BB)$.

Once all real roots $\lambda$ of $\phi(t)$ in $\BB$ are obtained,
we can determine all solutions during which $\rho(t)\models\Phi$ holds,
which are delivered as finitely many intervals $\delta_i$
with endpoints taken from those real roots $\lambda$ of $\phi(t)$.
For each solution interval $\delta_i$,
we have $\rho(t)\models\Diamond^\J\,\Phi$ holds for $t\in \I_i=\{t_1-t_2: t_1\in \delta_i \wedge t_2\in\J\}$,
a coverage.
Further, if the coverage union $\bigcup_i \I_i$ over all solution intervals $\delta_i$ completely covers $\I$,
the repeated reachability problem in finite horizon,
i.\,e.\@ $\rho(0) \models \Box^\I\,\Diamond^\J\,\Phi$,
can be decided to be true;
otherwise be false.

\begin{example}\label{ex4}
    Consider the repeated reachability property $\Psi\equiv\Box^\I\,\Diamond^\J\,\Phi$
	on the QCTMC $\QC_1$ shown in Example~\ref{ex1},
	where $\I=[0,\tfrac{3}{2}]$ and $\J=[0,1]$ are time intervals,
	and $\Phi \equiv x_2-x_1^2>0$ is an atomic proposition.
    The post-monitoring period $\mnt(\Psi)$ is $\sup\I+\sup\J=\tfrac{5}{2}$ obtained by Eq.~\eqref{eq:EP6}.
	Thereby it suffices to study the behavior of $\QC_1$ during $\BB = [0,\tfrac{5}{2}]$.
    From Example~\ref{ex3} we have known that
    $\rho(t) \models\Phi$ holds for $t\in(\lambda_1,\lambda_2)$,
    where $\lambda_1 \approx 0.987368$ and $\lambda_2 \approx 1.56093$ are two real roots of $x_2-x_1^2$.
    It implies that the ID $\rho(t)$ meets $\Diamond^\J\,\Phi$ when $t\in(\lambda_1-\sup\J,\lambda_2-\inf\J)$.
    Hence the repeated reachability property $\Psi$ is decided to be true at the initial ID,
    i.\,e.\@ $\rho(0) \models \Psi$,
    as $(\lambda_1-\sup\J,\lambda_2-\inf\J)$ completely covers $\I$. \qed
\end{example}

Finally we summarize the above as:
\begin{theorem}[Decidability]\label{thm:decide}
	The repeated reachability problem in finite horizon is decidable on quantum continuous-time Markov chains,
    when Schanuel's conjecture holds.
\end{theorem}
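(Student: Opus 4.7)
The plan is to reduce $\rho(0) \models \Box^\I\,\Diamond^\J\,\Phi$ to the real root isolation of a single real-valued exponential polynomial over a bounded rational interval, so that the machinery of $\textsf{Isolate}$ from~\cite{XMGY21} applies. Concretely, I would first compute the closed-form instantaneous description $\rho(t) = \vl(\exp(\M t)\cdot\lv(\rho(0)))$ by solving the vectorized Lindblad equation~\eqref{eq:ODE}. Since the entries of $\HH$ and each $\LL_j$ are algebraic, so are those of $\M$ and hence its eigenvalues. Invoking Lemma~\ref{lem:EP}, each $x_i(t)=\tr(\PP_i \cdot \rho(t))$ is a real-valued exponential polynomial of the form~\eqref{eq:EP}, and so is the observing expression $\phi(t)=(p(\x(t))-\inf\II)(p(\x(t))-\sup\II)$ extracted from $\Phi \equiv p(\x)\in\II$ via~\eqref{eq:EP1}.

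Next, I would fix the post-monitoring window. Applying the recurrence~\eqref{eq:EP6} to $\Psi=\neg\Diamond^\I\,(\neg\Diamond^\J\,\Phi)$ yields $\mnt(\Psi)=\sup\I+\sup\J$, so the truth of $\rho(0)\models\Psi$ is determined entirely by $\rho(t)$ on $\BB_0=[\inf\I+\inf\J,\sup\I+\sup\J]$, a bounded rational interval. With $\phi$ and $\BB_0$ in hand, I would invoke $\textsf{Isolate}(\phi,\BB_0)$ from~\cite[Algorithm~1]{XMGY21} to return finitely many disjoint intervals, each enclosing exactly one real root of $\phi$ in $\BB_0$ and together covering all of them. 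Reading off the signs of $\phi$ on the complementary open segments yields a finite family of solution intervals $\delta_i\subseteq\BB_0$ on which $\rho(t)\models\Phi$ holds.

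To assemble the answer, for each $\delta_i$ I would form the coverage $\I_i=\{t_1-t_2:t_1\in\delta_i,\,t_2\in\J\}$, again an interval with algebraic endpoints that records exactly those starting times from which $\Diamond^\J\,\Phi$ is witnessed via $\delta_i$. The repeated reachability then holds iff $\I\subseteq\bigcup_i\I_i$, an inclusion between unions of finitely many intervals with algebraic endpoints, which is decidable by standard algebraic-number comparisons. Modulo endpoint handling (open versus closed) on $\J$, Example~\ref{ex4} illustrates each of these steps concretely.

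The hard part, and the only place where Schanuel's conjecture enters, is the third step above: real root isolation for general exponential polynomials of the form~\eqref{eq:EP} is not known to be unconditionally decidable, and the specialised theorems referenced in Remark~\ref{rmk2} cover only strictly narrower rings than $\mathbb{Q}[t,\exp(\mu_1 t),\ldots,\exp(\mu_k t)]$ with arbitrarily many $\mathbb{Q}$-linearly independent complex algebraic $\mu_i$. Schanuel's conjecture is precisely what certifies that the LLL-based reduction to a multivariate algebraic polynomial (sharing the same zeros with $\phi$) is faithful, which in turn guarantees termination and correctness of $\textsf{Isolate}$. All remaining ingredients---closed-form integration of~\eqref{eq:Lindblad}, sign analysis on finitely many intervals, the Minkowski-style difference producing $\I_i$, and the final coverage check against $\I$---are effective over the algebraic numbers, so decidability of the whole problem reduces exactly to this single conditional step.
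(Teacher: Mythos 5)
Your proposal is correct and follows essentially the same route as the paper's Section~\ref{S5}: extract the observing expression $\phi(t)$ (Lemma~\ref{lem:EP}, Eq.~\eqref{eq:EP1}), restrict attention to the post-monitoring period $\BB_0=[\inf\I+\inf\J,\sup\I+\sup\J]$ via Eq.~\eqref{eq:EP6}, isolate the real roots of $\phi$ on $\BB_0$ with the Schanuel-conditional algorithm of~\cite{XMGY21}, and decide by checking whether the coverages $\I_i$ jointly cover $\I$. The only minor slip is calling the endpoints of the solution intervals algebraic---they are roots of exponential polynomials and generally transcendental---but the final coverage check is still effective, since these roots can be refined to arbitrary precision and exact coincidence with rational shifts of the endpoints of $\I$ reduces to testing $\phi$ at rational points, which is decidable by the transcendence of $\e$ (as in Lemma~\ref{lem:sign}).
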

Although this result is conditional,
some unconditional results can be obtained for the subclasses mentioned in Remark~\ref{rmk2}.
We recall and refine the essentials of the construction
particular for the repeated reachability in this section,
which will be used to make up a more efficient solving procedure soon.

\begin{remark}
	We have to point out that the boundedness of the time intervals $\I$ and $\J$ is a real restriction,
	without which we need to develop the real root isolation of $\phi(t)$ during an unbounded time interval.
	That goes beyond the rich scope of order-minimal theory~\cite{Wil96}
	that admits the solvability for any real analytic function restricted in a bounded region,
	and thus bring technical hardness.
	However, it is not in the case when we deal with
	the repeated reachability in finite horizon $\Box^\I\,\Diamond^\J\,\Phi$
	where the time intervals $\I$ and $\J$ are bounded.	
\end{remark}

\section{Efficiency Driven by Samples}\label{S6}
In the previous section we have shown the decidability of the repeated reachability in finite horizon.
Now we are to design a more efficient sample-driven solving procedure.
Before describing its rationale, we need a technical tool.

\begin{lemma}\label{lem:sign}
	Let $\rho(t)$ be the ID function of a QCTMC $\QC$ w.\,r.\,t.\@ time variable $t$,
	and $\Phi$ an STL formula.
	Then $\rho(t^*)\models\Phi$ is decidable at any rational sample $t^*$.
\end{lemma}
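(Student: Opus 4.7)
The plan is to proceed by structural induction on $\Phi$. Boolean connectives reduce trivially to composing truth values, while the until operator $\Psi_1 \ntl^\I \Psi_2$ evaluated at the fixed sample $t^*$ translates to a question over the bounded window $[t^*, t^* + \sup\I]$; this window admits a recursive treatment in which the truth sets of the sub-formulas $\Psi_1$ and $\Psi_2$ are computed as finite unions of intervals via the real-root-isolation procedure of Section~\ref{S5}, after which the existential quantifier over $s \in \I$ is a symbolic check on interval combinations. The core of the lemma therefore lies in the atomic case $\Phi \equiv p(\x) \in \II$.

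For atomic $\Phi$, it suffices to determine the signs of the two real numbers $p(\x(t^*)) - \inf\II$ and $p(\x(t^*)) - \sup\II$. By Lemma~\ref{lem:EP} each has the form $\sum_{j=1}^{N} c_j \exp(\alpha_j t^*)$ with $c_j, \alpha_j$ algebraic, and since $t^* \in \mathbb{Q}$, every exponent $\xi_j := \alpha_j t^*$ is itself algebraic. First I would extract a $\mathbb{Q}$-linearly independent basis $\mu_1, \ldots, \mu_k$ of the additive group generated by $\xi_1, \ldots, \xi_N$ using the LLL algorithm, and rewrite the expression as $P(\exp(\mu_1), \ldots, \exp(\mu_k))$ for a (Laurent) polynomial $P$ with algebraic coefficients obtained by collecting terms sharing the same $\mathbb{Z}$-exponent vector in the $\mu$-basis.

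The decisive step is then to invoke the Lindemann--Weierstrass theorem, used here \emph{unconditionally} in contrast to the variable-$t$ case of Section~\ref{S5}: the values $\exp(\mu_1), \ldots, \exp(\mu_k)$ are algebraically independent over $\mathbb{Q}$, and hence over the algebraic numbers as well, since algebraic independence is preserved under algebraic extensions of the ground field. Consequently the expression vanishes if and only if $P$ is the zero polynomial, which reduces to a symbolic test on each algebraic coefficient. When $P \not\equiv 0$, the value is a concrete nonzero real number whose sign is exposed by approximating each $\exp(\mu_i)$ via truncated Taylor series to sufficient precision; the a priori guarantee of non-vanishing ensures termination. Comparing the resulting signs of the two factors against the open/closed endpoints of $\II$ then decides $p(\x(t^*)) \in \II$.

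The main obstacle is precisely the appeal to Lindemann--Weierstrass that converts a transcendental sign problem into a polynomial-identity test; the remaining ingredients (LLL on algebraic exponents, manipulation of polynomials with algebraic coefficients, and arbitrary-precision evaluation of $\exp$) are standard effective-algebra routines. A pleasant by-product of this argument is that at rational samples no recourse to Schanuel's conjecture is needed, which is exactly what makes the sampling step inherently cheaper than the full real-root-isolation of Section~\ref{S5} and motivates the procedure to be developed next.
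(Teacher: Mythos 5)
Your proposal is correct and, for the case the paper actually exercises (signals $\Phi$ and Boolean combinations of them), it follows the same skeleton as the paper's proof: substitute the rational sample $t^*$ into the exponential polynomial, reduce $\rho(t^*)\models\Phi$ to the sign of a number of the form $\beta_0+\beta_1\e^{\alpha_1}+\cdots+\beta_m\e^{\alpha_m}$ with algebraic coefficients and exponents, decide vanishing symbolically by checking that all coefficients are zero, and otherwise fix the sign by numerical approximation of the exponentials. Where you genuinely differ is in how the zero test is justified: the paper cites the transcendence of $\e$, whereas you pass to a $\mathbb{Q}$-linearly independent basis of the exponents via LLL and invoke Lindemann--Weierstrass to get algebraic independence of the resulting exponentials. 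Your route is, if anything, the more precise one --- with irrational algebraic exponents $\alpha_j t^*$, transcendence of $\e$ alone does not yield ``vanishes iff all coefficients vanish''; the statement really needed is Lindemann--Weierstrass, which your argument supplies (and note that at a fixed sample the LLL step is dispensable: the exponents are already distinct algebraic numbers, so the linear-independence form of Lindemann--Weierstrass over the algebraic numbers applies directly to the original sum). Two small caveats: your by-product claim that Schanuel's conjecture is never needed holds for the atomic/Boolean case, but your inductive treatment of $\Psi_1 \ntl^\I \Psi_2$ re-enters the isolation machinery of Section~\ref{S5}, which is conditional on Schanuel, so the unconditional statement should be restricted to the case without nested until operators; this does not affect the lemma's use in the paper, where $\Phi$ is a signal or a CNF of signals, and in fact your structural-induction wrapper covers more than the paper's own proof spells out.
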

\begin{proof}
	We notice that
	$\rho(t)\models\Phi$ holds if and only if
	the observing expression $\phi(t)$ meets some sign-condition.
	Putting the rational number $t=t^*$ into Eq.~\eqref{eq:EP},
	we can see that $\phi(t^*)$ is a real number with the form
\[
	\beta_0 + \beta_1 \e^{\alpha_1} + \cdots + \beta_m \e^{\alpha_m},
\]
	where $\alpha_1,\ldots,\alpha_m$ are distinct $\mathbb{A}$-numbers
	and $\beta_0,\ldots,\beta_m$ are $\mathbb{A}$-numbers.
	By the fact \cite[Theorem~1.2]{Bak75} that $\e$ is \emph{transcendental} (i.\,e.\@ not algebraic),
	we can infer that $\phi(t^*)=0$ if and only if all $\beta_0,\ldots,\beta_m$ are zero.
	If $\phi(t^*)\ne 0$, the sign of $\phi(t^*)$ can be determined by sufficiently approaching $\e$,
	thus $\rho(t^*)\models\Phi$ is decided. \qed
\end{proof}

We turn to describe the rationale of the sample-driven solving procedure.
If an ID $\rho(t^*)$ of $\QC$ meets the STL formula $\Phi$,
whose decidability has just been proved in Lemma~\ref{lem:sign},
the initial ID $\rho(0)$ meets the STL formula $\Box^{\I'}\,\Diamond^\J\,\Phi$
for the interval $\I'=[t^*-\sup\J,t^*-\inf\J]$.
So, the repeated reachability $\rho(0) \models \Box^\I\,\Diamond^\J\,\Phi$ can be inferred from
the existence of a finite collection $\mathbb{T}$ of absolute times $t^*$,
satisfying $\rho(t^*) \models \Phi$,
over which the union $\bigcup_{t^*} [t^*-\sup\J,t^*-\inf\J]$ covers $\I$.
Moreover, the distance between two successive samples $t_i^*$ and $t_{i+1}^*$ in $\mathbb{T}$
should be not greater than the length $|\J|$ of $\J$,
as otherwise $[t_i^*-\sup\J,t_i^*-\inf\J]$ and $[t_{i+1}^*-\sup\J,t_{i+1}^*-\inf\J]$ are disjoint
and cannot cover the connected interval $\I$.
The existence of such a collection $\mathbb{T}$ is not only a sufficient condition but also a necessary one,
which is revealed by:
\begin{lemma}\label{lem:finite}
	There is a finite collection $\mathbb{T}$ of absolute times $t^*$,
	satisfying $\rho(t^*) \models \Phi$,
	over which the union
	\[
		\bigcup_{t^*\in\mathbb{T}} [t^*-\sup\J,t^*-\inf\J]
	\]
	covers $\I$,
	provided that $\rho(0) \models \Box^\I\,\Diamond^\J\,\Phi$ holds.
\end{lemma}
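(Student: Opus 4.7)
The plan is to recast the hypothesis as a set-covering condition on $\I$ and then extract a finite subcover by exploiting the fact that the region where $\Phi$ holds has only finitely many connected components on the compact post-monitoring window $\BB_0$. First I would introduce the satisfaction set $S_\Phi := \{s \in \BB_0 : \rho(s) \models \Phi\}$ and rewrite the hypothesis $\rho(0)\models\Box^\I\,\Diamond^\J\,\Phi$ as the inclusion
\[
	\I \;\subseteq\; S_\Phi - \J \;:=\; \{s - j : s \in S_\Phi,\; j \in \J\},
\]
since a time $t\in\I$ satisfies $\rho(t)\models\Diamond^\J\,\Phi$ precisely when $(t+\J)\cap S_\Phi \neq \emptyset$, which is equivalent to $t \in S_\Phi - \J$.

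Next I would argue that $S_\Phi$ is a finite disjoint union of intervals. By Lemma~\ref{lem:EP} and the form~\eqref{eq:EP1} of the observing expression $\phi(t)$, the truth of $\rho(s)\models\Phi$ is governed by the sign of $\phi$ (or, when $\II$ is half-open or closed, by the signs of its two factors). Since $\phi$ is a real-analytic exponential polynomial, either $\phi\equiv 0$ on $\BB_0$ --- in which case $S_\Phi$ is $\BB_0$ or $\emptyset$ and the conclusion is immediate --- or the zero set of $\phi$ on $\BB_0$ is finite, since an accumulation point would contradict the isolation of zeros of a nonzero real-analytic function while Bolzano--Weierstrass would force such an accumulation point on the compact $\BB_0$. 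Hence $S_\Phi = \bigsqcup_{i=1}^N \delta_i$ is a disjoint union of finitely many intervals whose endpoints lie in the root set of $\phi$ together with $\partial\BB_0$.

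Third, I would build $\mathbb{T}$. Each Minkowski difference $\delta_i - \J$ is again an interval of length $|\delta_i|+|\J|$, and from the first step $\I \subseteq \bigcup_i (\delta_i - \J)$. For each $i$ whose $\delta_i - \J$ meets $\I$, I choose points $t^*_{i,1} < t^*_{i,2} < \cdots < t^*_{i,k_i}$ in $\delta_i$ with consecutive spacing at most $|\J|$; then the closed intervals $t^*_{i,j} - \J = [t^*_{i,j}-\sup\J,\,t^*_{i,j}-\inf\J]$ overlap pairwise and their union covers $\delta_i - \J$, hence also $\I \cap (\delta_i - \J)$. Letting $\mathbb{T}$ be the union over $i$ of all the $t^*_{i,j}$ yields a finite set with $\rho(t^*)\models\Phi$ at every $t^*\in\mathbb{T}$ and $\I\subseteq\bigcup_{t^*\in\mathbb{T}}[t^*-\sup\J,t^*-\inf\J]$, as required.

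The main obstacle is the finite-interval decomposition of $S_\Phi$: it rests on real-analyticity of $\phi$ together with compactness of $\BB_0$, and is precisely the step that makes the bounded-horizon hypothesis on $\I$ and $\J$ indispensable, echoing the caveat the authors emphasize in the preceding remark. Endpoint conventions (open, closed, or half-open $\delta_i$ arising from strict versus non-strict sign conditions in $\Phi$) must be tracked when verifying $\rho(t^*)\models\Phi$ at each chosen sample, but can be accommodated by placing the $t^*_{i,j}$ in the interior of $\delta_i$ and inserting a bounded number of extra samples to recover endpoint coverage; the degenerate case $|\J|=0$ is either excluded by convention or reduces to the trivial situation in which $\I$ must itself consist of a single point.
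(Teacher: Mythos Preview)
Your approach is essentially the paper's: define the satisfaction set $S_\Phi$ (the paper calls it $\mathbb{S}$), use real-analyticity of $\phi$ on the compact window $\BB_0$ to decompose $S_\Phi$ into finitely many maximal intervals, and then sample each component with spacing at most $|\J|$. The Minkowski-difference phrasing $\I\subseteq S_\Phi-\J$ is a nice compact way to say what the paper writes out longhand.

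One technical point to tighten. Your claim that the chosen samples' union ``covers $\delta_i-\J$'' is not literally true when $\delta_i$ is open: no finite set of interior points $t^*\in(a,b)$ can make $\bigcup[t^*-\sup\J,t^*-\inf\J]$ reach the endpoint $a-\sup\J$ of $\delta_i-\J$. Your proposed fix (``insert extra samples to recover endpoint coverage'') cannot close this gap either, for the same reason. What saves the argument is that you only need to cover $\I$, not all of $\delta_i-\J$, and $\I$ is compact. The paper handles this by first passing to finitely many \emph{closed} subintervals $\I_j$ of $\I$, each contained in some $\mathbb{S}_j-\J$, with $\I=\bigcup_j\I_j$; since each $\I_j$ is compactly contained in the open $\mathbb{S}_j-\J$, one can then place the leftmost and rightmost samples strictly inside $\mathbb{S}_j$ while still covering $\I_j$, and the bound $1+\lfloor|\I_j|/|\J|\rfloor$ on the number of samples follows. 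Inserting that compactness step in place of your third-paragraph coverage claim makes your argument complete and coincide with the paper's.
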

\begin{proof}
Let $\mathbb{S}$ be the (possibly infinite) collection of those absolute times $t^* \in \BB_0$
with $\rho(t^*) \models \Phi$.
Under the assumption $\rho(0) \models \Box^\I\,\Diamond^\J\,\Phi$,
we know
\[
    \I \subseteq \bigcup_{t^*\in\mathbb{S}} [t^*-\sup\J,t^*-\inf\J].
\]
We proceed to refine $\mathbb{S}$ to a finite collection $\mathbb{T}$, satisfying
\[
    \I \subseteq \bigcup_{t^*\in\mathbb{T}} [t^*-\sup\J,t^*-\inf\J].
\]
The observing expression $\phi(t)$ extracted from the signal $\Phi$ is a real analytic function
that has finitely many real roots during any compact interval $\BB$, saying $\BB_0$.
So $\mathbb{S}$ has the union structure of
finitely many open intervals plus finitely many singleton sets,
on each of which $\phi(t)$ is sign-invariant.
It entails that $\mathbb{S}$ consists of finitely many maximal connected intervals $\mathbb{S}_j$,
no matter whether they are closed or open.
Correspondingly, there are finitely many closed intervals $\I_j$,
not necessarily disjoint pairwise,
such that
\[
	\I_j \subseteq \I \cap \left(\bigcup_{t^*\in\mathbb{S}_j} [t^*-\sup\J,t^*-\inf\J]\right)
\]
and $\I=\bigcup_j \I_j$.
Since the closed interval $\I_j$ has length $|\I_j|$
and every $\rho(t^*) \models \Phi$ gives a coverage with length $|\J|$ to $\I$,
we can refine $\mathbb{S}_j$ to at most $1+\lfloor |\I_j|/|\J| \rfloor$ elements $\mathbb{T}_j$ of $\mathbb{S}_j$
to achieve the same coverage.
Thereby, we get the desired finite collection $\mathbb{T}=\bigcup_j \mathbb{T}_j$
to achieve the same coverage as $\mathbb{S}$ does. \qed
\end{proof}

As analyzed above, we have known that
a successful sample $\rho(t^*)\models\Phi$ gives rise to
the segment $[t^*-\sup\J, t^*-\inf\J]$ partially covering $\I$.
However, what can we learn from a conflicting sample $\rho(t^*)\not\models\Phi$?
It is a neighborhood $\delta$ of $t^*$,
in which the observing expression $\phi(t)$ is sign-invariant.
It entails that if $\rho(t^*)\models\Phi$ does not hold,
$\rho(t)$ does not meet $\Phi$ anywhere of $\delta$,
and thus we can safely exclude this $\delta$ from the sample space $\BB$.

In detail, after a trial sample $t^*$,
no matter whether it is successful or conflicting,
we can calculate a sign-invariant neighborhood $\delta$ with the following manner.
If $\phi(t^*)=0$, it is an \emph{equation-type} constraint.
So we set $\delta$ to be the singleton set $\{t^*\}$.
Otherwise, it is an \emph{inequality-type} constraint.
Let $\psi_1,\linebreak[0]\ldots,\psi_k$ be all distinct irreducible factors of $\phi$,
and $\psi_1',\linebreak[0]\ldots,\psi_k'$ their respective derivatives.
\begin{itemize}
\item Firstly, we compute the radius $\epsilon_j = |\psi_j(t^*)|/\sup_{t\in\BB} |\psi_j'(t)|$.
Here, the numerator $|\psi_j(t^*)|$ is the height of $\psi_j(t^*)$ from zero
while the denominator $\sup_{t\in\BB} |\psi_j'(t)|$ is the maximal change rate of $\psi_j(t)$
during the sample space $\BB$,
so that $\psi_j(t)$ is sign-invariant in $(t^*-\epsilon_j,t^*+\epsilon_j)$.
\item Secondly, we compute the radius $\theta_j = |\psi_j'(t^*)|/\sup_{t\in\BB} |\psi_j''(t)|$,
so that $\psi_j(t)$ is monotonous in $(t^*-\theta_j,t^*+\theta_j)$.
Whenever $\theta_j>\epsilon_j$,
we could get a sign-invariant open interval $\delta_j$ extending $(t^*-\epsilon_j,t^*+\epsilon_j)$.
It is achieved by setting the left endpoint
\begin{equation}\label{eq:neighbor}
	\inf\delta_j=\begin{cases}
		t^*-\epsilon_j & \textup{if } \theta_j \le \epsilon_j, \\
		t^*-\theta_j 
		& \textup{if } \theta_j > \epsilon_j \wedge \psi_j(t^*) \psi_j(t^*-\theta_j) > 0, \\
		s^* & \textup{if } \theta_j > \epsilon_j \wedge \psi_j(t^*) \psi_j(t^*-\theta_j) \le 0,
	\end{cases}
\end{equation}
where $s^*$ is the unique zero of $\phi(t)$ in $(t^*-\theta_j,t^*)$,
which can be efficiently approached by monotonicity.
The right endpoint $\sup\delta_j$ is set symmetrically.
\item Finally, we obtain the neighborhood $\delta=\bigcap_{j=1}^k \delta_j$ to be excluded.
All factors $\psi_j(t)$ are sign-invariant in $\delta_j$,
so is their product $\phi(t)$ in $\delta$.
It also implies that $\rho(t)$ is truth-invariant in $\delta$.
\end{itemize}

\begin{example}\label{ex5}
    Again, we consider the repeated reachability property $\Psi\equiv\Box^\I\,\Diamond^\J\,\Phi$
	as in Example~\ref{ex4}.
    Here we will show how to compute the sign-invariant neighborhood $\delta$ of some concrete samples $t^*$.
    Suppose that we are given the sample $t_1^*=\tfrac{6}{5}$,
	at which the observing expression $\phi(\tfrac{6}{5})\approx 0.0066092$ is positive
	(see Fig.~\ref{fig1})
	and thus the ID $\rho(\tfrac{6}{5})$ meets $\Phi$.
	Since $\phi(t)$ itself is irreducible,
	we use the single exponential polynomial to compute the radius of the sign-invariant neighborhood.
	First, it is not hard to compute the two bounds
	$|\phi'(t)|<\tfrac{7}{2}$ and $|\phi''(t)|<\tfrac{21}{2}$ whenever $t\in\BB=[0,\tfrac{5}{2}]$.
	We then get
	\[
		\epsilon_1 =\left|\phi\left(\tfrac{6}{5}\right)\right|/\tfrac{7}{2} \gtrsim \tfrac{9441}{5000000}
		\quad \textup{and} \quad
		\theta_1 =\left|\phi'\left(\tfrac{6}{5}\right)\right|/\tfrac{21}{2} \gtrsim \tfrac{14689}{50000000}.
	\]
	As $\theta_1<\epsilon_1$, 
	we calculate the sign-invariant neighborhood $\delta_1$ as
	$(t_1^*-\epsilon_1, t_1^*+\epsilon_1) = (\tfrac{5990559}{5000000}, \tfrac{6009441}{5000000}) \linebreak[0]
	\approx (1.198112, 1.20189)$,
	which is excluded for consideration.

	We consider another sample $t_2^*=\tfrac{99}{100}$,
	at which the ID $\rho(\tfrac{99}{100})$ also satisfies the formula $\Phi$
	as $\phi(\tfrac{99}{100}) \approx 0.000017626>0$.
	Similarly, we get
	\[
		\epsilon_2 =\left|\phi\left(\tfrac{99}{100}\right)\right|/\tfrac{7}{2} \gtrsim \tfrac{1259}{25000000}
		\quad \textup{and} \quad
		\theta_2 =\left|\phi'\left(\tfrac{99}{100}\right)\right|/\tfrac{21}{2} \gtrsim \tfrac{1581}{250000}.
	\]
	As $\theta_2>\epsilon_2$,
	the sign-invariant neighborhood is $\delta_2 =(s_1^*,t_2^*+\theta_2)$,
	since $\phi(t_2^*)\phi(t_2^*-\theta_2)$ is negative while $\phi(t_2^*)\psi(t_2^*+\theta_2)$ is positive.
	Here $s_1^*=\lambda_1 \approx 0.987368$ is the unique real root of $\phi(t)$ during $(t_2^*-\theta_2,t_2^*)$,
	which could be approached up to any precision. \qed
\end{example}

For a successful sample $\rho(t^*) \models \Phi$,
we could further speed up the solving procedure by
expanding the coverage $[t^*-\sup\J,t^*-\inf\J]$
to the theoretically perfect $(\inf\delta-\sup\J,\sup\delta-\inf\J)$,
since $\phi(t)$ is sign-invariant in the neighborhood $\delta$.
But we fall to sample at the endpoints of $\delta$,
as the interval $\delta$ is open.
For the sake of effectiveness,
we need to provide only finitely many samples $\mathbb{T}$ from $\delta$
to cover $(\inf\delta-\sup\J,\sup\delta-\inf\J)$ as \emph{essentially} as possibly.
Here the term `essentially' means that
the missed coverage should not be too much,
saying Lebesgue measure not greater than $|\J|:=\sup\J-\inf\J$.
Assuming $|\delta|>|\J|$ (and otherwise trivially),
it can be achieved by three steps.
\begin{enumerate}
	\item The leftmost sample $l$ is chosen to
	be any element in $(\inf\delta,\inf\delta+|\J|/2]$.
	\item The rightmost sample $u$ is chosen to
	be any element in $[\sup\delta-|\J|/2,\sup\delta)$.
	\item The intermediate samples between $l$ and $u$ are
	any arithmetic progression with common difference not greater than $|\J|$.
\end{enumerate}
In the above treatment, for two successive samples $t_i^*$ and $t_{i+1}^*$,
we omit the intermediate samples $t\in(t_i^*,t_{i+1}^*)$ to produce a coverage,
which has already been produced by $t_i^*$ and $t_{i+1}^*$,
i.\,e.\@,
\begin{equation}\label{eq:essential1}
\begin{aligned}
	& \left([t_i^*-\sup\J,t_i^*-\inf\J] \cup [t_{i+1}^*-\sup\J,t_{i+1}^*-\inf\J]\right) \\
	= \ & \bigcup_{t\in[t_i^*,t_{i+1}^*]} [t-\sup\J,t-\inf\J],
\end{aligned}
\end{equation}
since the distance between $t_i$ and $t_{i+1}$ is bounded by $|\J|$.
It is also illustrated by Fig.~\ref{fig:cover1}.
There, in the first line, we can see that
the distance between two successive samples $t_i^*$ and $t_{i+1}^*$ is not greater than $|\J|$.
The two samples $t_i^*$ and $t_{i+1}^*$ respectively produce a coverage with length $|\J|$ in the second line.
The second line also shows that their coverages must be connected.

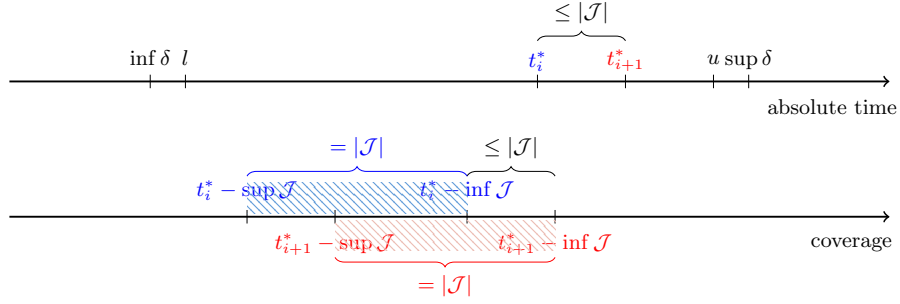
\begin{figure}[htb]
    \centering
    \scalebox{0.9}{
\begin{tikzpicture}
    
    \draw[thick,->] (0,0) coordinate (O2) -- (13,0) coordinate (E2);

    \coordinate (atime) at ($(O2)!0.935!(E2)$);
    \coordinate (infdelta2) at ($(O2)!0.16!(E2)$);
    \coordinate (l2) at ($(O2)!0.2!(E2)$);
    \coordinate (tistar) at ($(O2)!0.6!(E2)$);
    \coordinate (ti1star) at ($(O2)!0.7!(E2)$);
    \coordinate (u2) at ($(O2)!0.8!(E2)$);
    \coordinate (supdelta2) at ($(O2)!0.84!(E2)$);
    \foreach \x/\y in {infdelta2/$\inf\delta$, l2/$l$, tistar/$ $, ti1star/$ $, u2/$ $,supdelta2/$ $}
	{\draw (\x)--+(0, 1mm); \draw (\x)--+(0, -1mm);\node[above] at ([yshift=1.3mm]\x) {\y};}
     \node[above] at ([yshift=0.4mm]tistar) {\textcolor{blue}{$t_i^*$}};
     \node[above] at ([yshift=0.4mm] ti1star) {\textcolor{red}{$t_{i+1}^*$}};
     \node[above] at ([yshift=0.7mm] supdelta2) {\textcolor{black}{$\sup\delta$}};
     \node[above] at ([yshift=1.4mm] u2) {\textcolor{black}{$u$}};
    \node[below] at ([yshift=-1.5mm] atime) {absolute time};
    \mybracesmall{tistar}{ti1star}{$\leq|\J|$}{black}
    
    \draw[thick,->] (0,-2) coordinate (O1) -- (13,-2) coordinate (E1);
    \coordinate (coverage) at ($(O1)!0.957!(E1)$);
    \coordinate (tisupj) at ($(O1)!0.27!(E1)$);
    \coordinate (ti1supj) at ($(O1)!0.37!(E1)$);
    \coordinate (tiinfj) at ($(O1)!0.52!(E1)$);
    \coordinate (ti1infj) at ($(O1)!0.62!(E1)$);
    
    \mybrace{tiinfj}{ti1infj}{$\leq|\J|$}{black}
    \foreach \x/\y in {tisupj/$t_i^*-\sup\J$, tiinfj/$t_i^*-\inf\J$}
	{\draw (\x)--+(0, -1mm);\draw (\x)--+(0, +1mm);\node[above] at ([yshift=1.3mm] \x){\textcolor{blue}{\y}};}
    
    \foreach \x/\y in {ti1supj/$t_{i+1}^*-\sup\J$, ti1infj/$t_{i+1}^*-\inf\J$}
	{\draw (\x)--+(0, -1mm);\draw (\x)--+(0, +1mm);\node[below] at ([yshift=-1.3mm]\x){\textcolor{red}{\y}};}

    \node[below] at ([yshift=-2mm]coverage) {coverage};
    \mybracesmall{tisupj}{tiinfj}{$=|\J|$}{blue}
    \mybracesmallbelow{ti1supj}{ti1infj}{$=|\J|$}{red}
    \path[pattern=north west lines,pattern color=pink1] ([yshift=-0.5mm] ti1supj) rectangle ([yshift=-5mm] ti1infj)--cycle;
     \path[pattern=north west lines, pattern color=blue1] ([yshift=0.5mm] tisupj) rectangle ([yshift=5mm] tiinfj)--cycle;

\end{tikzpicture}
}
    \caption{Coverage produced by two successive samples $t_i^*$ and $t_{i+1}^*$}\label{fig:cover1}
\end{figure}

Moreover, we omit the leftmost segment $(\inf\delta,l)$
(and the rightmost segment $(u,\sup\delta)$) to produce a coverage.
It will not cause any trouble for the following reason.
The trouble occurs only when
the current neighborhood $\delta$ cannot produce a coverage containing $(\inf\delta-\sup\J,l-\sup\J)$,
implying that $\inf\delta-\sup\J$ should be covered
since the target $\I$ is compact (bounded and closed).
So there must be a neighborhood $\delta_\textup{left}$ left to $\delta$
that produces the coverage containing $\inf\delta-\sup\J$.
It entails that $\delta_\textup{left}$ contains $\inf\delta$.
Using the same treatment, we can see that
the distance between the rightmost sample $u_\textup{left}$ of $\delta_\textup{left}$
and the leftmost sample $l$ of $\delta$ is not greater than
$(\sup\delta_\textup{left}-u_\textup{left})+(l-\inf\delta) \le |\J|/2+|\J|/2=|\J|$,
and thus produces the coverage as $(\inf\delta,l)$,
i.\,e.\@,
\begin{equation}\label{eq:essential2}
\begin{aligned}
	& \left([u_\textup{left}-\sup\J,u_\textup{left}-\inf\J] \cup [l-\sup\J,l-\inf\J]\right) \\
	\supset \ & \bigcup_{t\in(\inf\delta,l)} [t-\sup\J,t-\inf\J].
\end{aligned}
\end{equation}
It is illustrated by Fig.~\ref{fig:cover2}.
There, the leftmost segment $(\inf\delta,l)$ of $\delta$ has length not greater than $|\J|/2$;
the rightmost segment $(u_\textup{left},\sup\delta_\textup{left})$ of $\delta_\textup{left}$
also has length not greater than $|\J|/2$.
The two segments are overlapping,
so the distance between $u_\textup{left}$ and $l$ is not greater than $|\J|$.

\begin{figure}[htb]
    \centering
    \scalebox{0.9}{
\begin{tikzpicture}
    \draw[thick,->] (0,0) coordinate (O2) -- (13,0) coordinate (E2);
    \coordinate (coverage) at ($(O2)!0.935!(E2)$);
    \coordinate (infdeltaleft) at ($(O2)!0.04!(E2)$);
    \coordinate (lleft) at ($(O2)!0.13!(E2)$);
    \coordinate (uleft) at ($(O2)!0.43!(E2)$);
    \coordinate (supdeltaleft) at ($(O2)!0.53!(E2)$);
     \coordinate (infdelta) at ($(O2)!0.48!(E2)$);
    \coordinate (l) at ($(O2)!0.58!(E2)$);
    \coordinate (supdelta) at ($(O2)!0.83!(E2)$);
    \coordinate (u) at ($(O2)!0.78!(E2)$);
    \foreach \x/\y in {infdeltaleft/$\inf\delta_{\textup{left}}$, lleft/$l_{\textup{left}}$,
	supdeltaleft/$ $, uleft/$ $}
	{\draw (\x)--+(0, 1mm);\draw (\x)--+(0, -1mm);\node[above] at ([yshift=1.3mm] \x) {\textcolor{blue}{\y}};}
    \node[above] at ([yshift=1.2mm] supdeltaleft) {\textcolor{blue}{$\sup\delta_{\textup{left}}$}};
    \node[above] at ([yshift=1.8mm] uleft) {\textcolor{blue}{$u_{\textup{left}}$}};
    \foreach \x/\y in {infdelta/$\inf\delta$, l/$l$, supdelta/$ $, u/$ $}
	{\draw (\x)--+(0, 1mm);\draw (\x)--+(0, -1mm);\node[below] at ([yshift=-1.3mm] \x) {\textcolor{red}{\y}};}
    \node[below] at ([yshift=-2mm] u) {\textcolor{red}{$u$}};
    \node[below] at ([yshift=-1.5mm] supdelta) {\textcolor{red}{$\sup\delta$}};
    \node[below] at ([yshift=-1.1mm] coverage) {absolute time};

    \mybracesmall{uleft}{supdeltaleft}{$\le|\J|/2$}{blue}
    \mybracesmallbelow{infdelta}{l}{$\le|\J|/2$}{red}
 \path[pattern=north west lines,pattern color=pink1] ([yshift=-0.5mm] infdelta) rectangle ([yshift=-5mm] supdelta)--cycle;
     \path[pattern=north west lines,pattern color=blue1] ([yshift=0.5mm] infdeltaleft) rectangle ([yshift=5mm] supdeltaleft)--cycle;
\end{tikzpicture}
}
    \caption{Positions of the current neighborhood $\delta$
	and the left one $\delta_\textup{left}$}\label{fig:cover2}
\end{figure}
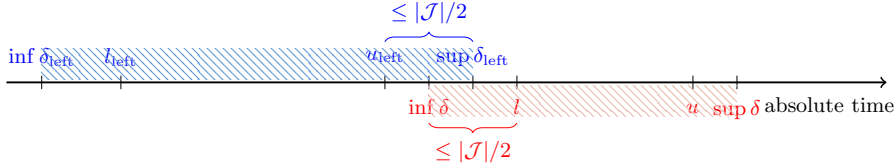

\begin{example}\label{ex6}
Continuing to consider Example~\ref{ex5},
we now are to decide
whether the initial ID $\rho(0)$ meets the repeated reachability property $\Box^\I\,\Diamond^\J\,\Phi$.
It is solved by the sample-driven solving procedure shown in Table~\ref{table:ex6}.
Let us explain it below.
For the sample $t_3^*=1$ at which the atomic proposition $\Phi$ holds,
i.\,e.\@, $t_3^*$ is a satisfying sample,
its sign-invariant neighborhood $\delta_3$ is $(\tfrac{4970403}{5000000}, \tfrac{5029597}{5000000})$,
the essential samples from $\delta_3$ is $\{1\}$,
which contributes the coverage $\I_3=[1-\sup\J,1-\inf\J]=[0,1]$ to the target $\I$.
For another sample $t_4^*=\tfrac{3}{2}$ at which $\Phi$ is met too,
its sign-invariant neighborhood $\delta_4$ is $(\frac{748689}{500000}, \tfrac{ 751311}{500000})$,
the essential samples from $\delta_4$ is $\{\tfrac{3}{2}\}$,
which contributes the coverage $\I_4=[\tfrac{1}{2},\tfrac{3}{2}]$ to $\I$.
Totally we get the finite collection of satisfying samples $\mathbb{T}=\{1,\tfrac{3}{2}\}$,
so that the time interval $\I=[0,\tfrac{3}{2}]$ is covered by $\I'=\I_3 \cup \I_4=[0,\tfrac{3}{2}]$.
Hence the repeated reachability $\Box^\I\,\Diamond^\J\,\Phi$ holds at the initial ID,
i.\,e.\@ $\rho(0) \models \Box^\I\,\Diamond^\J\,\Phi$. \qed

\begin{table}[htb]
\caption{Sample-driven procedure for deciding $\Box^\I\,\Diamond^\J\,\Phi$}\label{table:ex6}
\centering
{\small
\begin{tabular}{p{2cm}<{\centering}cccp{2cm}<{\centering}}
\toprule
samples $t^*$ & radius & neighborhood $\delta$ & ess. samples $\mathbb{T}$ & coverage $\I'$ \\
\midrule
$t_3^*=1$ & \multirow{2}{*}{$\theta_3=\tfrac{591949}{100000000}$}
& $\delta_3=(t_3^*-\theta_3, t_3^*+\theta_3)$ & \multirow{2}{*}{$\{1\}$}& \multirow{2}{*}{$[0,1]$} \\
satisfying && $\approx (0.994081,1.00591)$ && \\
\midrule
$t_4^*=\tfrac{3}{2}$ & \multirow{2}{*}{$\theta_4=\tfrac{1311031}{500000000}$}
& $\delta_4=(t_4^*-\theta_4, t_4^*+\theta_4)$ & \multirow{2}{*}{$\{\tfrac{3}{2}\}$}
& \multirow{2}{*}{$[\tfrac{1}{2},\tfrac{3}{2}]$} \\
satisfying && $\approx (1.497378, 1.50262)$ && \\
\bottomrule
\end{tabular}}
\end{table}
\end{example}

We summarize the solving procedure as Algorithm~\ref{solve}
with correctness analysis delivered here.
We first notice that
the sample space $\BB$ contains all satisfying samples
to the repeated reachability problem $\Box^\I\,\Diamond^\J\,\Phi$.
These satisfying samples can be split into sign-invariant neighborhoods $\delta$ of concrete samples $t^*$
by the fact the observing expression $\phi(t)$ is real analytic on the compact interval $\BB$.
For each neighborhood $\delta$,
there are at most $1+\lfloor|\delta|/|\J|\rfloor$ essential samples
that contribute the necessary and sufficient coverage,
as revealed by Eqs.~\eqref{eq:essential1} and~\eqref{eq:essential2}.
On the other hand,
there are finitely many neighborhoods $\delta$ computed by Eq.~\eqref{eq:neighbor} in the procedure,
since otherwise there is an irreducible factor $\psi_j(t)$ of $\phi(t)$,
such that for any positive constant $\epsilon$,
$|\psi_j(t)| \le \epsilon$ and $|\psi_j'(t)| \le \epsilon$
simultaneously hold at some sample $t^*$ in $\BB$.
The latter will lead to a contradiction as follows.
\begin{enumerate}
\item Let $\chi=\{t^*\in \BB: \psi_j(t^*)=0\}$ be the set of real roots of $\psi_j(t)$ in $\BB$.
It is a finite set, since $\psi_j(t)$ is analytic on $\BB$ and $\BB$ is compact.
\item Let $\epsilon_1=\min \{|\psi_j'(t^*)|:t^* \in \chi\}$.
It is a positive constant,
since otherwise the irreducible factor $\psi_j(t)$ has a repeated real root,
which contradicts Schanuel's conjecture~\cite{Ax71}
stating that an irreducible exponential polynomial has no repeated root with the only possible exception $0$.
(The conjecture is commonly believed
to be an unproved theorem by mathematical community,
since it has been raised in 1960's.
Once the conjecture fails at any instance to Algorithm~\ref{solve},
an important breakthrough in number theory will be achieved by that instance serving as a counterexample.
Up to the present, no such instance has been reported.)
\item Let $\delta(t,r)$ denote the neighborhood of $t$ with radius $r$.
We choose $r_0$ to be such a radius,
satisfying that $|\psi_j'(t)|$ is not less than $\tfrac{1}{2}\epsilon_1$
on all neighborhoods $\delta(t^*,r_0)$ with $t^*\in \chi$.
Let $\chi^o=\bigcup_{t^*\in \chi} \delta(t^*,r_0)$
be the union of finitely many neighborhoods.
So we have that $|\psi_j'(t)| \ge \tfrac{1}{2}\epsilon_1$ holds on $\chi^o$.
\item Let $\epsilon_2=\inf\{|\psi_j(t)|:t \in \BB \setminus \chi^o\}$.
It is a positive constant,
since all real roots $t^*$ of $\psi_j(t)$ in $\BB$ are excluded with their neighborhoods $\delta(t^*,r_0)$.
That means $|\psi_j(t)| \ge \epsilon_2$ holds on $\BB \setminus \chi^o$.
\end{enumerate}
Combining the last two issues,
we obtain that for each sample in $\BB$,
either $|\psi_j(t)| \ge \tfrac{1}{2}\epsilon_1$ holds or $|\psi_j'(t)| \ge \epsilon_2$ holds,
and that the procedure invokes Eq.~\eqref{eq:neighbor} to compute the neighborhoods $\delta$
only finitely many times.
Hence the output $\mathbb{T}$ is a finite collection of samples that produce the same coverage
as the whole sample space $\BB$ does,
entailing Algorithm~\ref{solve} is correct.

\begin{algorithm}[htb]
	\caption{\textsf{A Sample-Driven Solving Procedure}}\label{solve}
	\begin{algorithmic}[1]
		\item[] $$\{t_1^*,\ldots,t_m^*\} \Leftarrow \textsf{Solve}(\rho,\Box^\I\,\Diamond^\J\,\Phi)$$
		\Require $\rho(t)$ is the dynamics of a QCTMC $\QC$ and
			$\Box^\I\,\Diamond^\J\,\Phi$ is the repeated reachability in finite horizon to be checked;
		\Ensure $t_1^*,\ldots,t_m^*$ are finitely many absolute times, satisfying
			\[
				\bigwedge_{i=1}^m \rho(t_i^*) \models \Phi
				\quad \textup{and} \quad
				\I \subseteq \bigcup_{i=1}^m [t_i^*-\sup\J, t_i^*-\inf\J],
			\]
			whenever $\rho(0) \models \Box^\I\,\Diamond^\J\,\Phi$.
		\State let $\phi(t)$ be the observing expression of $\Phi$ as defined in Eq.~\eqref{eq:EP1};
		\State $\BB \gets [\inf\I+\inf\J,\sup\I+\sup\J]$
		that is the post-monitoring period of $\Box^\I\,\Diamond^\J\,\Phi$;
		\State $\I' \gets \emptyset$ and $\mathbb{T} \gets \emptyset$;
		\While{$\BB \ne \emptyset$ and $\I\setminus\I' \ne \emptyset$}
			\State let $t^*$ be an element of $\BB$;
			\State compute the sign-invariant neighborhood $\delta$ of $t^*$ by Eq.~\eqref{eq:neighbor};
			\If{$\rho(t^*) \models \Phi$}
                \If{$|\delta|\le|\J|$} $\I' \gets \I' \cup [t^*-\sup\J,t^*-\inf\J]$
				    and $\mathbb{T} \gets \mathbb{T} \cup \{t^*\}$;
                \Else 
				    \State let $l$ be an element in $(\inf\delta,\inf\delta+|\J|/2]$;
				    \Comment{set the leftmost sample}
				    \State let $u$ be an element in $[\sup\delta-|\J|/2,\sup\delta)$;
				    \Comment{set the rightmost sample}
                    \State let $l=t_1^*<t_2^*<\cdots<t_k^*=u$ be
				        the shortest arithmetic progression with common difference $\le |\J|$;
				        \Comment{set the intermediate samples}
				    \State $\I' \gets \I' \cup [t_1^*-\sup\J,t_k^*-\inf\J]$
				        and $\mathbb{T} \gets \mathbb{T} \cup \{t_1^*,t_2^*,\ldots,t_k^*\}$;
                \EndIf
			\EndIf
			\State $\BB \gets \BB \setminus \delta$;
		\EndWhile
		\If{$\I\setminus\I' = \emptyset$} \Return $\mathbb{T}$;
		\Else\ \Return $\emptyset$ as reporting $\rho(0) \not\models \Box^\I\,\Diamond^\J\,\Phi$.
		\EndIf
	\end{algorithmic}
\end{algorithm}

\begin{example}\label{ex7}
Reconsidering Example~\ref{ex6},
we now are to decide whether the initial ID $\rho(0)$ meets
the repeated reachability $\Psi\equiv\Box^{\mathcal{K}}\,\Diamond^\J\,\Phi$,
where $\mathcal{K}=[1,2]$ is a fresh time interval.
The post-monitoring period $\mnt(\Psi)$ is $\sup\mathcal{K}+\sup\J=3 $ by Eq.~\eqref{eq:EP6},
implying $\BB=[1,3]$.
Using Algorithm~\ref{solve},
we get a finite union of sign-invariant neighborhoods $\bigcup_{i=5}^{16} \delta_i \approx(0.852464, 3.26579)$,
which can cover the whole sample space $\BB$
but contribute the coverage $\I'=[0, \tfrac{3}{2}]$ partially covering the target $\mathcal{K}$.
More details can be found in Table~\ref{table_ex7}.
Hence, the repeated reachability $\Box^{\mathcal{K}}\,\Diamond^\J\,\Phi$ does not hold at the initial ID $\rho(0)$,
i.\,e.\@ $\rho(0) \not\models \Box^{\mathcal{K}}\,\Diamond^\J\,\Phi$.

\begin{table}[ht!]
	\caption{Sample-driven procedure for deciding $\Box^{\mathcal{K}}\,\Diamond^\J\,\Phi$}\label{table_ex7}
	\centering
	{\small
	\begin{tabular}{p{2cm}<{\centering}cccp{2cm}<{\centering}}
	\toprule
	samples $t^*$ & radius & neighborhood $\delta$ & ess. samples $\mathbb{T}$ & coverage $\I'$ \\
	\midrule
	$t_5^*=3$ & \multirow{2}{*}{$\epsilon_5=\tfrac{332241}{1250000}$}
	& $\delta_5=(t_5^*-\epsilon_5,t_5^*+\epsilon_5)$ & \multirow{2}{*}{$\emptyset$} & \\
	conflicting && $\approx (2.73421, 3.26579)$ \\
	\midrule
	$t_6^*=\tfrac{5}{2}$ & \multirow{2}{*}{$\epsilon_6=\tfrac{328869}{1250000}$}
	& $\delta_6=(t_6^*-\epsilon_6,t_6^*+\epsilon_6)$ & \multirow{2}{*}{$\emptyset$} & \\
	conflicting && $\approx (2.23691, 2.76309)$ & \\
	\midrule
	$t_7^*=\tfrac{11}{5}$ & \multirow{2}{*}{$\epsilon_7=\frac{286513}{1250000}$}
	& $\delta_7=(t_7^*-\epsilon_7,t_7^*+\epsilon_7)$ & \multirow{2}{*}{$\emptyset$} & \\
	conflicting && $\approx (1.97079, 2.42921)$ \\
	\midrule
	$t_8^*=\tfrac{19}{10}$ & \multirow{2}{*}{$\epsilon_8=\frac{738173}{5000000}$}
	& $\delta_8=(t_8^*-\epsilon_8,t_8^*+\epsilon_8)$ & \multirow{2}{*}{$\emptyset$} & \\
	conflicting && $\approx (1.75237, 2.04763)$ \\
	\midrule
	$t_9^*=\tfrac{17}{10}$ & \multirow{2}{*}{$\theta_9=\frac{33937}{500000}$}
	& $\delta_9=(t_9^*-\theta_9,t_9^*+\theta_9)$ & \multirow{2}{*}{$\emptyset$} & \\
	conflicting && $\approx (1.63213, 1.76787)$ \\
	\midrule
	$t_{10}^*=\tfrac{8}{5}$ & \multirow{2}{*}{$\theta_{10}=\tfrac{21821}{312500}$}
	& $\delta_{10}=(s_2^*, t_{10}^*+\theta_{10})$ & \multirow{2}{*}{$\emptyset$} & \\
	conflicting && $\approx (1.56093, 1.66982)$ \\
	\midrule
	$t_{11}^*=s_2^*$ & \multirow{2}{*}{$0$} & $\delta_{11}=\{s_2^*\}$ & \multirow{2}{*}{$\emptyset$} & \\
	conflicting && $\approx \{1.56093\}$ \\
	\midrule
	$t_{12}^*=\tfrac{3}{2}$ & \multirow{2}{*}{$\theta_{12}=\tfrac{25763}{390625}$}
	& $\delta_{12}=(t_{12}^*-\theta_{12}, s_2^*)$ & \multirow{2}{*}{$\{\tfrac{3}{2}\}$}
	& \multirow{2}{*}{$[\tfrac{1}{2},\tfrac{3}{2}]$} \\
	satisfying && $\approx(1.43405,1.56093)$ \\
	\midrule
	$t_{13}^*=\tfrac{7}{5}$ & \multirow{2}{*}{$\epsilon_{13}=\tfrac{342521}{5000000}$}
	& $\delta_{13}=(t_{13}^*-\epsilon_{13},t_{13}^*+\epsilon_{13})$ & \multirow{2}{*}{$\{\tfrac{7}{5}\}$}
	& \multirow{2}{*}{$[\tfrac{2}{5},\tfrac{7}{5}]$} \\
	satisfying && $\approx (1.33150,1.46850)$ \\
	\midrule
	$t_{14}^*=\tfrac{13}{10}$ & \multirow{2}{*}{$\epsilon_{14}=\tfrac{4879431}{50000000}$} 
	& $\delta_{14}=(t_{14}^*-\epsilon_{14},t_{14}^*+\epsilon_{14})$
	& \multirow{2}{*}{$\{\tfrac{13}{10}\}$} & \multirow{2}{*}{$[\tfrac{3}{10},\tfrac{13}{10}]$} \\
	satisfying && $\approx (1.20242, 1.39758)$ \\
	\midrule
	$t_{15}^*=\tfrac{6}{5}$ & \multirow{2}{*}{$\epsilon_{15}=\tfrac{132921}{1250000}$}
	& $\delta_{15}=(t_{15}^*-\epsilon_{15},t_{15}^*+\epsilon_{15})$
	& \multirow{2}{*}{$\{\tfrac{6}{5}\}$} & \multirow{2}{*}{$[\tfrac{1}{5},\tfrac{6}{5}]$} \\
	satisfying && $\approx(1.09367,1.30633)$ \\
	\midrule
	$t_{16}^*=1$ & \multirow{2}{*}{$\theta_{16}=\tfrac{3722353}{25000000}$}
	& $\delta_{16}=(s_{16}^*, t_{16}^*+\theta_{16})$ & \multirow{2}{*}{$\{1\}$} & \multirow{2}{*}{$[0,1]$} \\
	satisfying && $\approx (0.852464, 1.14889)$ \\
	\bottomrule
	\end{tabular}}
	\begin{tablenotes}
	\item Here, $s_2^*=\lambda_2 \approx 1.56093$ (see Fig.~\ref{fig1}),
	which is the unique real root of $\phi(t)$ during the interval $(t_{10}^*-\theta_{10},t_{10}^*)$.
	\end{tablenotes}
	\end{table}
\end{example}

\section{Experimentation}\label{S7}
The prototypes of both the presented sample-driven solving procedure (Algorithm~\ref{solve})
and the isolation-based one in the previous work~\cite[Algorithm~1]{XMGY21} have been implemented in Python~3.8,
running on an Apple M1 core with 16 GB memory.
We have experimented on randomly generated instances from the two-qubit Hilbert space.
Specifically speaking,
symbolically computing the exponentials of high-dimensional matrices and manipulating their entries
is well known to be of expensive computational cost,
but is out of what we mainly concern in the present paper.
So we fix the sample dynamics of the QCTMC $\QC_1$ shown in Example~\ref{ex1}
to demonstrate the performance of two procedures.
Whereas, the randomness comes from two aspects:
\begin{enumerate}
\item the observing expressions are randomly generated of various degrees and heights
(that is
the maximum of absolute values of a $\mathbb{Z}$-polynomial's coefficients,
which is usually used to reflect the size of that polynomial;
e.\,g.\@, the height of the polynomial $6x_2-2x_1-9x_4-3$ is $\max\{|6|,|-2|,|-9|,|-3|\}=9$),
\item the time intervals $\I$ and $\J$ are also randomly generated.
\end{enumerate}

The effectiveness, efficiency and scalability of the two procedures are validated by
the time and space consumption on randomly generated signals $\Phi$,
in which observing expressions $p(\x)$ are measured in terms of
i) the degree of an integer polynomial $p(\x)$ chosen from $1$ to $4$,
and ii) the height of $p(\x)$ selected at four scales: $[1,10]$, $[11,100]$, $[101,500]$ and $[501,1000]$.
The experimental results are summarized in the middle column of Table~\ref{table:poly_exp}.
We also generalize the inner single signal $\Phi$
by multiple ones composed in conjunctive normal form (CNF),
saying $\Phi \equiv (\Phi_1 \vee \Phi_2 \vee \Phi_3) \wedge (\Phi_4 \vee \Phi_5\vee \Phi_6)$
investigated in the experiments,
where
\[
	\begin{array}{lll}
		\Phi_1 \equiv 6x_2-2x_1-9x_4-3 > 0 & &\quad \Phi_4 \equiv 4x_1+3x_2-8x_3+4 > 0 \\
		\Phi_2 \equiv 4x_1+3x_2-8x_3+4 > 0 & &\quad \Phi_5 \equiv 2x_3+6 < 0 \\
		\Phi_3 \equiv 4x_2+9 \le 0 & &\quad \Phi_6 \equiv 2x_1+6 < 0.
	\end{array}
\]
Supposing that $\delta_i$ is a neighborhood on which $\Phi_i$ is met everywhere,
$(\delta_1 \cup \delta_2 \cup \delta_3) \cap (\delta_4 \cup \delta_5 \cup \delta_6)$
is a truth-invariant neighborhood of $\Phi$;
supposing that $\delta_i$ is a neighborhood on which $\Phi_i$ is met nowhere,
$(\delta_1 \cap \delta_2 \cap \delta_3) \cup (\delta_4 \cap \delta_5 \cap \delta_6)$
is a truth-invariant neighborhood of $\Phi$.
Finally the results are summarized in the right column of Table~\ref{table:poly_exp},
while the source code and complete experimental data can be found at
\href{https://github.com/Holly-Jiang/RR.git}{\textcolor{black}{https://github.com/Holly-Jiang/RR}}.

\begin{table}[ht]
    \caption{Performance of the isolation-based and the sample-driven solving procedures}\label{table:poly_exp}
    \centering
    \belowrulesep=0pt
\aboverulesep=0pt
	\begin{tabular}{cc|cccc|cccc}
	\toprule
	\multirow{3}{*}{degree} & \multirow{3}{*}{height}
	& \multicolumn{4}{c|}{single signals} & \multicolumn{4}{c}{multiple signals} \\
    & & \multicolumn{2}{c}{isolation-based} & \multicolumn{2}{c|}{sample-driven}
	& \multicolumn{2}{c}{isolation-based} & \multicolumn{2}{c}{sample-driven} \\
	& & time (s) & space (MB) & time & space & time & space & time & space \\
	\midrule
	\multirow{4}{*}{1} & $[1,10]$ & 9.55 & 121 & \textbf{2.51} & \textbf{113}
	& 31.98 & 465 & \textbf{16.72} & \textbf{459} \\
	& $[11,100]$ & 7.12 & 117 & \textbf{1.80} & \textbf{108} 
        & 21.69 & 460 & \textbf{ 9.19} & \textbf{422} \\
	& $[101, 500]$ & 9.21 & 120 & \textbf{1.95} & \textbf{109} & 27.90 & 469 & \textbf{10.77} & \textbf{425} \\
	& $[501,1000]$ &  6.02 & 117 & \textbf{ 1.81} & \textbf{108} & 23.96 & 370 & \textbf{7.65} & \textbf{333} \\
	\midrule
	\multirow{4}{*}{2} & $[1,10]$ & 5.43 & 115 & \textbf{1.83} & \textbf{107}
	&  33.85 & 432 & \textbf{12.45} & \textbf{416} \\
	& $[11,100]$ & 7.77 & 117 & \textbf{1.92} & \textbf{109} & 20.47 & 268 & \textbf{10.89} & \textbf{255} \\
	& $[101,500]$ & 15.19 & 119 & \textbf{4.30} &\textbf{110} & 27.45 & 392 & \textbf{16.87} & \textbf{377} \\
    & $[501,1000]$ & 8.48 & 117 & \textbf{1.96} & \textbf{109} & 42.96 & 412 & \textbf{21.31} & \textbf{402} \\
	\midrule
	\multirow{4}{*}{3} & $[1,10]$ & 27.12 & 127 & \textbf{7.61} & \textbf{117}
	& 104.47 & 411 & \textbf{46.33} & \textbf{389} \\
	& $[11,100]$ & 46.73 & 125 & \textbf{4.75} & \textbf{115} & 48.71 & 300 & \textbf{18.09} & \textbf{293} \\
	& $[101,500]$ & 43.79 & 131 & \textbf{2.17} & \textbf{110}& 80.18 & 411 & \textbf{24.14} & \textbf{391} \\
    & $[501,1000]$ & 11.12& 121 & \textbf{2.32} & \textbf{110} &128.51 & 495 & \textbf{46.97} & \textbf{468} \\
	\midrule
	\multirow{4}{*}{4} & $[1,10]$ & 24.81 & 128 & \textbf{2.31} & \textbf{110}
	& 239.84 & 616 & \textbf{103.07} & \textbf{525} \\
	& $[11,100]$ & 31.36 & 127 & \textbf{2.17} & \textbf{110} & 76.67 & 408 & \textbf{ 16.51} & \textbf{392} \\
	& $[101,500]$ &  40.45 & 129 & \textbf{2.54} & \textbf{112} & 102.20 & 302 & \textbf{32.78} & \textbf{271} \\
    & $[501,1000]$ & 21.72 & 126  & \textbf{11.07} & \textbf{120} &  105.14 & 290 & \textbf{65.81} & \textbf{253} \\
	\bottomrule
	\end{tabular}
\end{table}

\begin{figure}[htp]
	\centering
	\scalebox{0.75}{
		\begin{tikzpicture}
			\begin{axis}[
				small,
				axis y line*=left,
				bar width=5pt,
				enlargelimits=0,
				axis y line*=left,
				legend style={at={(0.5,-0.15)},
					anchor=north,legend columns=-1},
				ylabel={time consumption (s)},
				xtick={0,5,10,15,20},
				xticklabels={$ $, $\leq 10$, $\leq 100$,$\leq 500$, $\leq 1000$},
				y dir=normal,
				legend columns=2,
				legend style={draw=none},
				ymin=0, ymax=30
				]
				\addplot[nodes near coords, color=red1, mark=*,point meta=explicit symbolic,scatter/classes={
					a={mark=square*,red1},
					c={mark=square,draw=red1}}] table [x=x,y=a,meta=sample] {./data/data_d1.txt};
				\addplot[nodes near coords, color=blue2, mark=*,point meta=explicit symbolic,scatter/classes={
					a={mark=square*,blue2},
					c={mark=square,draw=blue2}}] table [x=x,y=c,meta=iso] {./data/data_d1.txt};
				\legend{,sample-driven,, isolation-based}
			\end{axis}
			\begin{axis}[
				small,
				axis y line*=right,
				axis x line=none,
				enlargelimits=0,
				legend style={at={(0.5,-0.15)},
					anchor=north,legend columns=-1},
				ylabel={space consumption (MB)},
				xtick={0,5,10,15,20},
				xticklabels={$ $, $\leq 10$, $\leq 100$,$\leq 500$, $\leq 1000$},
				y dir=normal,
				ymin=80, ymax=130
				]
				\addplot [nodes near coords, color=blue2,only marks, mark=*,point meta=explicit symbolic,scatter/classes={
					a={mark=square*,blue2},
					c={mark=square,draw=blue2}}] table [ x=x,y=d,meta=iso] {./data/data_d1.txt};
				\addplot [nodes near coords,only marks,  color=red1,scatter, mark=*,point meta=explicit symbolic,		scatter/classes={
					a={mark=square*,red1},
					c={mark=square,draw=red1}}] table [ x=x,y=b,meta=sample] {./data/data_d1.txt};
	\draw [blue2, dotted](1,105)--(1,111);
	\draw [blue2, dotted](2,105)--(2,111);
	\draw [blue2, dotted](3,106)--(3,113);
	\draw [blue2, dotted](4,106)--(4,113);
	\draw [blue2, dotted](5,106)--(5,113);
	\draw [blue2, dotted](6,106)--(6,106);
	\draw [blue2, dotted](7,110)--(7,115);
	\draw [blue2, dotted](8,110)--(8,116);
	\draw [blue2, dotted](9,110)--(9,116);
	\draw [blue2, dotted](10,110)--(10,116);
	\draw [blue2, dotted](11,101)--(11,111);
	\draw [blue2, dotted](12,102)--(12,114);
	\draw [blue2, dotted](13,102)--(13,116);
	\draw [blue2, dotted](14,106)--(14,118);
	\draw [blue2, dotted](15,106)--(15,118);
	\draw [blue2, dotted](16,102)--(16,111);
	\draw [blue2, dotted](17,103)--(17,113);
	\draw [blue2, dotted](18,104)--(18,115);
	\draw [blue2, dotted](19,104)--(19,115);
	\draw [blue2, dotted](20,105)--(20,116);
			\end{axis}
		\node at(2.4,-1.5){(a) single signal, degree $=1$};
		\end{tikzpicture}
		\begin{tikzpicture}
			\begin{axis}[
				small,
				enlargelimits=0,
				axis y line*=left,
				legend style={at={(0.5,-0.15)},
					anchor=north,legend columns=-1},
				ylabel={time consumption (s)},
				xtick={0,5,10,15,20},
				xticklabels={$ $, $\leq 10$, $\leq 100$,$\leq 500$, $\leq 1000$},
				y dir=normal,
				legend columns=2,
				legend style={draw=none},
				ymin=0, ymax=100
				]
				\addplot[nodes near coords,color=red1, mark=square*, point meta=explicit symbolic,scatter/classes={
					a={mark=square*,red1},
					c={mark=square,draw=red1}}] table [x=x,y=a,meta=sample] {./data/data_d2.txt};\label{time_data2_sample};
				\addplot[nodes near coords,color=blue2, mark=square*,point meta=explicit symbolic,scatter/classes={
					a={mark=square*,blue2},
					c={mark=square,draw=blue2}}] table [x=x,y=c,meta=iso] {./data/data_d2.txt};\label{time_data2_isolate};
				\legend{,sample-driven,, isolation-based}
			\end{axis};
			\begin{axis}[
				axis y line*=right,
				axis x line=none,
				small,
				enlargelimits=0,
				legend style={at={(0.5,-0.15)},
					anchor=north,legend columns=-1},
				ylabel={space consumption (MB)},
				y dir =normal,
				ymin=80, ymax=140
				]
				\addplot [nodes near coords,color=blue2,only marks, mark=*,point meta=explicit symbolic,
				scatter/classes={
					a={mark=square*,blue2},
					c={mark=square,draw=blue2}}] table [ x=x,y=d,meta=iso] {./data/data_d2.txt};
				\addplot [nodes near coords,only marks ,color=red1,scatter, mark=*,point meta=explicit symbolic,
				scatter/classes={
					a={mark=square*,red1},
					c={mark=square,draw=red1}}] table [ x=x,y=b,meta=sample] {./data/data_d2.txt};
	\draw [blue2, dotted](1,106)--(1,120);
	\draw [blue2, dotted](2,107)--(2,121);
	\draw [blue2, dotted](3,107)--(3,121);
	\draw [blue2, dotted](4,107)--(4,121);
	\draw [blue2, dotted](5,107)--(5,121);
	\draw [blue2, dotted](6,108)--(6,110);
	\draw [blue2, dotted](7,108)--(7,113);
	\draw [blue2, dotted](8,109)--(8,115);
	\draw [blue2, dotted](9,109)--(9,116);
	\draw [blue2, dotted](10,109)--(10,117);
	\draw [blue2, dotted](11,112)--(11,114);
	\draw [blue2, dotted](12,112)--(12,116);
	\draw [blue2, dotted](13,112)--(13,116);
	\draw [blue2, dotted](14,112)--(14,117);
	\draw [blue2, dotted](15,112)--(15,119);
	\draw [blue2, dotted](16,104)--(16,110);
	\draw [blue2, dotted](17,107)--(17,117);
	\draw [blue2, dotted](18,107)--(18,117);
	\draw [blue2, dotted](19,107)--(19,117);
	\draw [blue2, dotted](20,108)--(20,118);
			\end{axis}
		\node at(2.4,-1.5){(b) single signal, degree $=2$};
		\end{tikzpicture}
	}
	\scalebox{0.75}{
		\begin{tikzpicture}
			\begin{axis}[
				small,
				axis y line*=left,
				bar width=5pt,
				enlargelimits=0,
				axis y line*=left,
				legend style={at={(0.5,-0.15)},
					anchor=north,legend columns=-1},
				ylabel={time consumption (s)},
				xtick={0,5,10,15,20},
				xticklabels={$ $, $\leq 10$, $\leq 100$,$\leq 500$, $\leq 1000$},
				y dir=normal,
				legend columns=2,
				legend style={draw=none},
				ymin=0, ymax=200
				]
				\addplot[nodes near coords, color=red1, mark=*,point meta=explicit symbolic,		scatter/classes={
					a={mark=square*,red1},
					c={mark=square,draw=red1}}] table [x=x,y=a,meta=sample] {./data/data_d3.txt};
				\addplot[nodes near coords, color=blue2, mark=*,point meta=explicit symbolic,		scatter/classes={
					a={mark=square*,blue2},
					c={mark=square,draw=blue2}}] table [x=x,y=c,meta=iso] {./data/data_d3.txt};
				\legend{,sample-driven,, isolation-based}
			\end{axis}
			\begin{axis}[
				small,
				axis y line*=right,
				axis x line=none,
				enlargelimits=0,
				legend style={at={(0.5,-0.15)},
					anchor=north,legend columns=-1},
				ylabel={space consumption (MB)},
				xtick={0,5,10,15,20},
				xticklabels={$ $, $\leq 10$, $\leq 100$,$\leq 500$, $\leq 1000$},
				y dir=normal,
				ymin=80, ymax=130
				]
				\addplot [nodes near coords, color=blue2,only marks, mark=*,point meta=explicit symbolic,		scatter/classes={
					a={mark=square*,blue2},
					c={mark=square,draw=blue2}}] table [ x=x,y=d,meta=iso] {./data/data_d3.txt};
				\addplot [nodes near coords,only marks,  color=red1,scatter, mark=*,point meta=explicit symbolic,		scatter/classes={
					a={mark=square*,red1},
					c={mark=square,draw=red1}}] table [ x=x,y=b,meta=sample] {./data/data_d3.txt};
	\draw [blue2, dotted](1,104)--(1,123);
	\draw [blue2, dotted](2,104)--(2,124);
	\draw [blue2, dotted](3,105)--(3,124);
	\draw [blue2, dotted](4,106)--(4,125);
	\draw [blue2, dotted](5,106)--(5,125);
	\draw [blue2, dotted](6,101)--(6,116);
	\draw [blue2, dotted](7,101)--(7,116);
	\draw [blue2, dotted](8,101)--(8,118);
	\draw [blue2, dotted](9,101)--(9,119);
	\draw [blue2, dotted](10,101)--(10,120);
	\draw [blue2, dotted](11,109)--(11,117);
	\draw [blue2, dotted](12,110)--(12,119);
	\draw [blue2, dotted](13,110)--(13,120);
	\draw [blue2, dotted](14,110)--(14,123);
	\draw [blue2, dotted](15,110)--(15,123);
	\draw [blue2, dotted](16,104)--(16,121);
	\draw [blue2, dotted](17,105)--(17,123);
	\draw [blue2, dotted](18,106)--(18,124);
	\draw [blue2, dotted](19,106)--(19,125);
	\draw [blue2, dotted](20,107)--(20,125);
			\end{axis}
		\node at(2.4,-1.5){(c) single signal, degree $=3$};
		\end{tikzpicture}
		\begin{tikzpicture}
			\begin{axis}[
				small,
				axis y line*=left,
				bar width=5pt,
				enlargelimits=0,
				axis y line*=left,
				legend style={at={(0.5,-0.15)},
					anchor=north,legend columns=-1},
				ylabel={time consumption (s)},
				xtick={0,5,10,15,20},
				xticklabels={$ $, $\leq 10$, $\leq 100$,$\leq 500$, $\leq 1000$},
				y dir=normal,
				legend columns=2,
				legend style={draw=none},
				ymin=0, ymax=200
				]
				\addplot[nodes near coords, color=red1, mark=*,point meta=explicit symbolic,		scatter/classes={
					a={mark=square*,red1},
					c={mark=square,draw=red1}}] table [x=x,y=a,meta=sample] {./data/data_d4.txt};
				\addplot[nodes near coords, color=blue2, mark=*,point meta=explicit symbolic,		scatter/classes={
					a={mark=square*,blue2},
					c={mark=square,draw=blue2}}] table [x=x,y=c,meta=iso] {./data/data_d4.txt};
				\legend{,sample-driven,, isolation-based}
			\end{axis}
			\begin{axis}[
				small,
				axis y line*=right,
				axis x line=none,
				enlargelimits=0,
				legend style={at={(0.5,-0.15)},
					anchor=north,legend columns=-1},
				ylabel={space consumption (MB)},
				xtick={0,5,10,15,20},
				xticklabels={$ $, $\leq 10$, $\leq 100$,$\leq 500$, $\leq 1000$},
				y dir=normal,
				ymin=80, ymax=135
				]
				\addplot [nodes near coords, color=blue2,only marks, mark=*,point meta=explicit symbolic,		scatter/classes={
					a={mark=square*,blue2},
					c={mark=square,draw=blue2}}] table [ x=x,y=d,meta=iso] {./data/data_d4.txt};
				\addplot [nodes near coords,only marks,  color=red1,scatter, mark=*,point meta=explicit symbolic,		scatter/classes={
					a={mark=square*,red1},
					c={mark=square,draw=red1}}] table [ x=x,y=b,meta=sample] {./data/data_d4.txt};
	\draw [blue2, dotted](1,107.0)--(1,106.0);
	\draw [blue2, dotted](2,119.0)--(2,118.0);
	\draw [blue2, dotted](3,120.0)--(3,127.0);
	\draw [blue2, dotted](4,120.0)--(4,127.0);
	\draw [blue2, dotted](5,120.0)--(5,127.0);
	\draw [blue2, dotted](6,119.0)--(6,113.0);
	\draw [blue2, dotted](7,119.0)--(7,128.0);
	\draw [blue2, dotted](8,119.0)--(8,129.0);
	\draw [blue2, dotted](9,119.0)--(9,129.0);
	\draw [blue2, dotted](10,120.0)--(10,131.0);
	\draw [blue2, dotted](11,111.0)--(11,126.0);
	\draw [blue2, dotted](12,112.0)--(12,126.0);
	\draw [blue2, dotted](13,112.0)--(13,127.0);
	\draw [blue2, dotted](14,113.0)--(14,129.0);
	\draw [blue2, dotted](15,113.0)--(15,129.0);
	\draw [blue2, dotted](16,110.0)--(16,112.0);
	\draw [blue2, dotted](17,110.0)--(17,115.0);
	\draw [blue2, dotted](18,110.0)--(18,118.0);
	\draw [blue2, dotted](19,111.0)--(19,120.0);
	\draw [blue2, dotted](20,111.0)--(20,124.0);
			\end{axis}
		\node at(2.4,-1.5){(d) single signal, degree $=4$};
		\end{tikzpicture}
	}
	\scalebox{0.75}{
		\begin{tikzpicture}
			\begin{axis}[
				small,
				axis y line*=left,
				enlargelimits=0,
				legend style={at={(0.5,-0.15)},
					anchor=north,legend columns=-1},
				ylabel={time consumption (s)},
				xtick={0,5,10,15,20},
				xticklabels={$ $, $\leq 10$, $\leq 100$,$\leq 500$, $\leq 1000$},
				y dir=normal,
				legend columns=2,
				legend style={draw=none},
				ymin=0, ymax=200
				]
				\addplot[nodes near coords, color=red1, mark=*,point meta=explicit symbolic,		scatter/classes={
					a={mark=square*,red1},
					c={mark=square,draw=red1}}] table [x=x,y=a, meta=sample] {./data/data_bool_d1.txt};
				\addplot[nodes near coords, color=blue2, mark=*,point meta=explicit symbolic,		scatter/classes={
					a={mark=square*,blue2},
					c={mark=square,draw=blue2}}] table [x=x,y=c, meta=iso] {./data/data_bool_d1.txt};
				\legend{,sample-driven, ,isolation-based}
			\end{axis}
			\begin{axis}[
				small,
				axis y line*=right,
				axis x line=none,
				enlargelimits=0,
				legend style={at={(0.5,-0.15)},
					anchor=north,legend columns=-1},
				ylabel={space consumption (MB)},
				xtick={0,5,10,15,20},
				xticklabels={$ $, $\leq 10$, $\leq 100$,$\leq 500$, $\leq 1000$},
				y dir=normal,
				ymin=0, ymax=1800
				]
				\addplot [nodes near coords,color=blue2, only marks, mark=*,point meta=explicit symbolic,
				scatter/classes={
					a={mark=square*,blue2},
					c={mark=square,draw=blue2}}] table [ x=x,y=d,meta=iso] {./data/data_bool_d1.txt};
				\addplot [nodes near coords,only marks,  color=red1,scatter, mark=*,point meta=explicit symbolic,
				scatter/classes={
					a={mark=square*,red1},
					c={mark=square,draw=red1}}] table [ x=x,y=b,meta=sample] {./data/data_bool_d1.txt};
	\draw [blue2, dotted](1,528)--(1,588);
	\draw [blue2, dotted](2,324)--(2,358);
	\draw [blue2, dotted](3,1081)--(3,1195);
	\draw [blue2, dotted](4,866)--(4,965);
	\draw [blue2, dotted](5,759)--(5,848);
	\draw [blue2, dotted](6,829)--(6,902);
	\draw [blue2, dotted](7,840)--(7,935);
	\draw [blue2, dotted](8,1602)--(8,1775);
	\draw [blue2, dotted](9,857)--(9,950);
	\draw [blue2, dotted](10,323)--(10,358);
	\draw [blue2, dotted](11,432)--(11,449);
	\draw [blue2, dotted](12,656)--(12,715);
	\draw [blue2, dotted](13,441)--(13,484);
	\draw [blue2, dotted](14,663)--(14,730);
	\draw [blue2, dotted](15,885)--(15,975);
	\draw [blue2, dotted](16,1042)--(16,1138);
	\draw [blue2, dotted](17,963)--(17,1066);
	\draw [blue2, dotted](18,650)--(18,716);
	\draw [blue2, dotted](19,434)--(19,478);
	\draw [blue2, dotted](20,759)--(20,837);
			\end{axis}
	  \node at(2.4,-1.5){(e) CNF of multiple signals, degree $=1$};
		\end{tikzpicture}
			\begin{tikzpicture}
			\begin{axis}[
				small,
				bar width=5pt,
				axis y line*=left,
				enlargelimits=0,
				legend style={at={(0.5,-0.15)},
					anchor=north,legend columns=-1},
				ylabel={time consumption (s)},
				xtick={0,5,10,15,20},
				xticklabels={$ $, $\leq 10$, $\leq 100$,$\leq 500$, $\leq 1000$},
				y dir=normal,
				ymin=0, ymax=200,
				legend columns=2,
				legend style={draw=none}
				]
				\addplot [nodes near coords,color=red1, mark=*,point meta=explicit symbolic,
				scatter/classes={
					a={mark=square*,red1},
					c={mark=square,draw=red1}}] table [x=x,y=a, meta=sample] {./data/data_bool_d2.txt};\label{pgfplots:label1}
				\addplot [nodes near coords,color=blue2, mark=*,point meta=explicit symbolic,
				scatter/classes={
					a={mark=square*,blue2},
					c={mark=square,draw=blue2}}] table [x=x,y=c, meta=iso] {./data/data_bool_d2.txt};\label{pgfplots:label2}
				\legend{,sample-driven, , isolation-based}
			\end{axis}
			\begin{axis}[
				small,
				bar width=5pt,
				enlargelimits=0,
				axis y line*=right,
				axis x line=none,
				legend style={at={(0.5,-0.15)},
					anchor=north,legend columns=-1},
				ylabel={space consumption (MB)},
				xtick={0,5,10,15,20},
				xticklabels={$ $, $\leq 10$, $\leq 100$,$\leq 500$, $\leq 1000$},
				y dir=normal,
				ymin=0, ymax=1400
				]
				\addplot [nodes near coords,only marks, color=blue2,scatter, mark=*,point meta=explicit symbolic,
				scatter/classes={
					a={mark=square*,blue2},
					c={mark=square,draw=blue2}}] table [ x=x,y=d,meta=iso] {./data/data_bool_d2.txt};
				\addplot [nodes near coords,only marks, color=red1,scatter, mark=*,point meta=explicit symbolic,
				scatter/classes={
					a={mark=square*,red1},
					c={mark=square,draw=red1}}] table [ x=x,y=b,meta=sample] {./data/data_bool_d2.txt};
	\draw [blue2, dotted](1,816)--(1,910);
	\draw [blue2, dotted](2,735)--(2,846);
	\draw [blue2, dotted](3,639)--(3,736);
	\draw [blue2, dotted](4,1298)--(4,1356);
	\draw [blue2, dotted](5,762)--(5,865);
	\draw [blue2, dotted](6,201)--(6,212);
	\draw [blue2, dotted](7,859)--(7,941);
	\draw [blue2, dotted](8,764)--(8,840);
	\draw [blue2, dotted](9,446)--(9,487);
	\draw [blue2, dotted](10,782)--(10,854);
	\draw [blue2, dotted](11,527)--(11,570);
	\draw [blue2, dotted](12,752)--(12,836);
	\draw [blue2, dotted](13,432)--(13,485);
	\draw [blue2, dotted](14,434)--(14,488);
	\draw [blue2, dotted](15,793)--(15,869);
	\draw [blue2, dotted](16,937)--(16,1025);
	\draw [blue2, dotted](17,437)--(17,483);
	\draw [blue2, dotted](18,331)--(18,338);
	\draw [blue2, dotted](19,887)--(19,920);
	\draw [blue2, dotted](20,667)--(20,695);
			\end{axis}
		\node at(2.4,-1.5){(f) CNF of multiple signals, degree $=2$};
		\end{tikzpicture}
	}
	\scalebox{0.75}{
		\begin{tikzpicture}
			\begin{axis}[
				small,
				axis y line*=left,
				bar width=5pt,
				enlargelimits=0,
				axis y line*=left,
				legend style={at={(0.5,-0.15)},
					anchor=north,legend columns=-1},
				ylabel={time consumption (s)},
				xtick={0,5,10,15,20},
				xticklabels={$ $, $\leq 10$, $\leq 100$,$\leq 500$, $\leq 1000$},
				y dir=normal,
				legend columns=2,
				legend style={draw=none},
				ymin=0, ymax=500
				]
				\addplot[nodes near coords, color=red1, mark=*,point meta=explicit symbolic,		scatter/classes={
					a={mark=square*,red1},
					c={mark=square,draw=red1}}] table [x=x,y=a,meta=sample] {./data/data_bool_d3.txt};
				\addplot[nodes near coords, color=blue2, mark=*,point meta=explicit symbolic,		scatter/classes={
					a={mark=square*,blue2},
					c={mark=square,draw=blue2}}] table [x=x,y=c,meta=iso] {./data/data_bool_d3.txt};
				\legend{,sample-driven,, isolation-based}
			\end{axis}
			\begin{axis}[
				small,
				axis y line*=right,
				axis x line=none,
				enlargelimits=0,
				legend style={at={(0.5,-0.15)},
					anchor=north,legend columns=-1},
				ylabel={space consumption (MB)},
				xtick={0,5,10,15,20},
				xticklabels={$ $, $\leq 10$, $\leq 100$,$\leq 500$, $\leq 1000$},
				y dir=normal,
				ymin=80, ymax=1600
				]
				\addplot [nodes near coords, color=blue2,only marks, mark=*,point meta=explicit symbolic,		scatter/classes={
					a={mark=square*,blue2},
					c={mark=square,draw=blue2}}] table [ x=x,y=d,meta=iso] {./data/data_bool_d3.txt};
				\addplot [nodes near coords,only marks,  color=red1,scatter, mark=*,point meta=explicit symbolic,		scatter/classes={
					a={mark=square*,red1},
					c={mark=square,draw=red1}}] table [ x=x,y=b,meta=sample] {./data/data_bool_d3.txt};
	\draw [blue2, dotted](1,544 )--(1,588.0);
	\draw [blue2, dotted](2,1022.0)--(2,1148.0);
	\draw [blue2, dotted](3,1042.0)--(3,1166.0);
	\draw [blue2, dotted](4,697.0)--(4,791.0);
	\draw [blue2, dotted](5,1053.0)--(5,1210.0);
	\draw [blue2, dotted](6,1216.0)--(6,1333.0);
	\draw [blue2, dotted](7,455.0)--(7,500.0);
	\draw [blue2, dotted](8,1376.0)--(8,1543.0);
	\draw [blue2, dotted](9,575.0)--(9,652.0);
	\draw [blue2, dotted](10,693.0)--(10,785.0);
	\draw [blue2, dotted](11,428.0)--(11,458.0);
	\draw [blue2, dotted](12,1011.0)--(12,1154.0);
	\draw [blue2, dotted](13,681.0)--(13,791.0);
	\draw [blue2, dotted](14,918.0)--(14,1052.0);
	\draw [blue2, dotted](15,462.0)--(15,495.0);
	\draw [blue2, dotted](16,1211.0)--(16,1328.0);
	\draw [blue2, dotted](17,227.0)--(17,233.0);
	\draw [blue2, dotted](18,910.0)--(18,938.0);
	\draw [blue2, dotted](19,456.0)--(19,471.0);
	\draw [blue2, dotted](20,689.0)--(20,709.0);
			\end{axis}
		\node at(2.4,-1.5){(g) CNF of multiple signals, degree $=3$};
		\end{tikzpicture}
		\begin{tikzpicture}
			\begin{axis}[
				small,
				axis y line*=left,
				bar width=5pt,
				enlargelimits=0,
				axis y line*=left,
				legend style={at={(0.5,-0.15)},
					anchor=north,legend columns=-1},
				ylabel={time consumption (s)},
				xtick={0,5,10,15,20},
				xticklabels={$ $, $\leq 10$, $\leq 100$,$\leq 500$, $\leq 1000$},
				y dir=normal,
				legend columns=2,
				legend style={draw=none},
				ymin=0, ymax=500
				]
				\addplot[nodes near coords, color=red1, mark=*,point meta=explicit symbolic,		scatter/classes={
					a={mark=square*,red1},
					c={mark=square,draw=red1}}] table [x=x,y=a,meta=sample] {./data/data_bool_d4.txt};
				\addplot[nodes near coords, color=blue2, mark=*,point meta=explicit symbolic,		scatter/classes={
					a={mark=square*,blue2},
					c={mark=square,draw=blue2}}] table [x=x,y=c,meta=iso] {./data/data_bool_d4.txt};
				\legend{,sample-driven,, isolation-based}
			\end{axis}
			\begin{axis}[
				small,
				axis y line*=right,
				axis x line=none,
				enlargelimits=0,
				legend style={at={(0.5,-0.15)},
					anchor=north,legend columns=-1},
				ylabel={space consumption (MB)},
				xtick={0,5,10,15,20},
				xticklabels={$ $, $\leq 10$, $\leq 100$,$\leq 500$, $\leq 1000$},
				y dir=normal,
				ymin=80, ymax=1800
				]
				\addplot [nodes near coords, color=blue2,only marks, mark=*,point meta=explicit symbolic,		scatter/classes={
					a={mark=square*,blue2},
					c={mark=square,draw=blue2}}] table [ x=x,y=d,meta=iso] {./data/data_bool_d4.txt};
				\addplot [nodes near coords,only marks,  color=red1,scatter, mark=*,point meta=explicit symbolic,		scatter/classes={
					a={mark=square*,red1},
					c={mark=square,draw=red1}}] table [ x=x,y=b,meta=sample] {./data/data_bool_d4.txt};
	\draw [blue2, dotted](1,534)--(1,569);
	\draw [blue2, dotted](2,1246)--(2,1385);
	\draw [blue2, dotted](3,1377)--(3,1450);
	\draw [blue2, dotted](4,1035)--(4,1103);
	\draw [blue2, dotted](5,576)--(5,615);
	\draw [blue2, dotted](6,1063)--(6,1232);
	\draw [blue2, dotted](7,684)--(7,769);
	\draw [blue2, dotted](8,687)--(8,787);
	\draw [blue2, dotted](9,576)--(9,662);
	\draw [blue2, dotted](10,1508)--(10,1726);
	\draw [blue2, dotted](11,1315)--(11,1539);
	\draw [blue2, dotted](12,341)--(12,399);
	\draw [blue2, dotted](13,800)--(13,938);
	\draw [blue2, dotted](14,1146)--(14,1346);
	\draw [blue2, dotted](15,460)--(15,541);
	\draw [blue2, dotted](16,318)--(16,374);
	\draw [blue2, dotted](17,750)--(17,883);
	\draw [blue2, dotted](18,868)--(18,1022);
	\draw [blue2, dotted](19,771)--(19,910);
	\draw [blue2, dotted](20,444)--(20,525);
			\end{axis}
		\node at(2.4,-1.5){(h) CNF of multiple signals, degree $=4$};
		\end{tikzpicture}
	}
\caption{Time and space consumption of the instances in Table~\ref{table:poly_exp}}\label{fig:exp}
\end{figure}
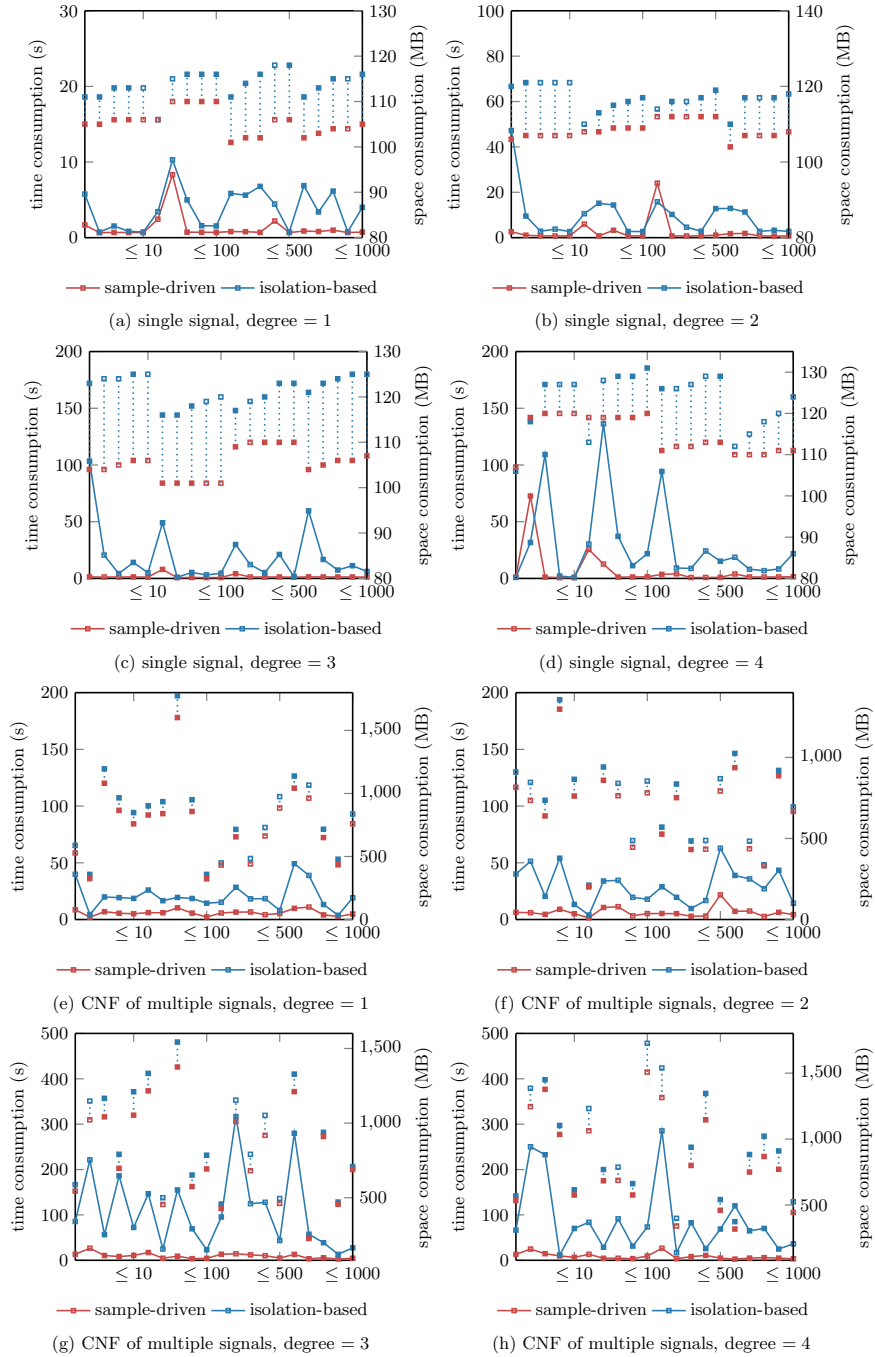

\paragraph{Effectiveness and efficiency of the two procedures}
For each row in Table~\ref{table:poly_exp},
we get the average time and space consumption of deciding five randomly generated instances of
a specified degree and height. 
In the case of observing expressions with a fixed degree,
there exists only a small fluctuation of time and space consumption,
as the height increases.
With the observing expressions changing from the single signal to the CNF of multiple signals,
the average time and space consumption increases.
Overall,
as demonstrated in Table~\ref{table:poly_exp} and Fig.~\ref{fig:exp},
the results on deciding the repeated reachability in our randomly generated instances
are achieved at an acceptable level of consumption
when applying two  procedures.

\paragraph{Superiority of sample-driven solving procedure over isolation-based one}
In Fig.~\ref{fig:exp},
we have a more intuitive representation of the consumption of time and space for each instance,
with results of the sample-driven solving procedure in red and results of the isolation-based one in blue.
Here, the above markers reflect the space consumption
while the below line charts reflect the time consumption.
The solid square markers indicate that the observing expressions satisfy the repeated reachability
corresponding to the random intervals,
while the hollow ones refute.
The dotted vertical line indicates the difference in space consumption
between the sample-driven solving procedure and the isolation-based one for that instance.
We can see:
\begin{itemize}
\item When dealing with the single-signal expressions,
	the respective difference in space consumption of the two procedures is not obvious
	as the degree and height vary. 
	The space consumption of both procedures greatly increases
	when solving the constraints containing a CNF of multiple signals.
\item For the same instance,
	the sample-driven solving procedure shows more efficient advantages than the isolation-based one.
	When dealing with more complicated constraints containing a CNF of multiple signals
	instead of a single signal,
	the incremental time consumed by the sample-driven solving procedure
	is much less than that by the isolation-based one.
\end{itemize}
Compared to the isolation-based procedure,
the sample-driven solving procedure saves $83\%$ in time consumption and $9\%$ in space consumption
on average for single-signal constraints,
and saves up to $59\%$ in time consumption and $7\%$ in space consumption
on average for constraints composed of multiple signals.
Apparently,
the sample-driven solving procedure demonstrates great superiority over the isolation-based one 
and is believed to efficient and scalable when encountering complicated situations.

\section{Conclusion}\label{S8}
In this paper,
we have studied the repeated reachability problem over QCTMCs.
First, the decidability was reduced to
the real root isolation of exponential polynomials.
To speed up the procedure, we presented a sample-driven procedure,
which could effectively refine the sample space after each time of sampling,
no matter whether the sample itself was successful or conflicting.
Randomly generated instances had validated the improvement on efficiency.
For future work, we will explore two aspects:
\begin{enumerate}
	\item The proposed method could be applied to verify
	the repeated reachability $\Box^\I\,\Diamond^\J\,\Phi$ and $\omega$-regular properties
	of the general real-time linear system.
	\item The repeated reachability problems in infinite horizon $\Box^\I\,\Diamond\,\Phi$,
	$\Box\,\Diamond^\J\,\Phi$ and $\Box\,\Diamond\,\Phi$
	could be considered for developing the $\mu$-calculus~\cite{BrW18} against QCTMCs.
\end{enumerate}

\section*{Acknowledgments}
The authors thank Yuan Feng for theoretical clarification and Yijia Chen for the suggestion on $\mu$-calculus.

\end{document}